\documentclass[runningheads]{llncs}

\usepackage[table,dvipsnames]{xcolor}

\usepackage{thmtools}
\newcounter{restatableTheorem}
\newcommand{\envheader}[2]{%
  \refstepcounter{restatableTheorem}\par\smallskip \textbf{{#2}~\thesection.\thetheorem.}
    \if\relax\detokenize{#1}\relax\else{({#1})}\fi 
}

\usepackage[colorinlistoftodos,prependcaption,textsize=small]{todonotes}

\usepackage[normalem,normalbf]{ulem}
\usepackage[font=small]{caption}
\usepackage{amsmath}
\usepackage{amssymb}
\usepackage{xspace}
\usepackage{tikz}
\usetikzlibrary{shapes.misc, positioning,decorations.pathreplacing}
\usetikzlibrary{arrows,automata,patterns}
\usetikzlibrary{arrows.meta,snakes}

\usepackage{subcaption}
\usepackage{wrapfig}
\usepackage{paralist}

\usepackage[inline]{enumitem}
\setlist[enumerate]{label = (\roman*), ref = \theenumii.\roman*}

\usepackage{mathtools}

\usepackage{leftindex}

\usepackage{rotating}
\usepackage{makecell}
\usepackage{enumitem}

\usepackage{tabularray}
\UseTblrLibrary{booktabs}
\usepackage{siunitx} 

\usepackage[linesnumbered,noend,ruled]{algorithm2e}
\DontPrintSemicolon
\SetKwInput{KwInput}{Input}
\SetKwInput{KwOutput}{Output}
\SetKw{Continue}{continue}
\SetKw{Break}{break}
\SetKwIF{If}{ElseIf}{Else}{if}{}{else if}{else}{}
\SetAlCapFnt{\small}
\SetAlCapNameFnt{\small}

\usepackage{orcidlink}
\usepackage{marvosym}
\usepackage{hyperref}
\usepackage[capitalize]{cleveref}
\Crefname{algocf}{Algorithm}{Algorithms}

\input{macros.tex}

\InputIfFileExists{cutmargins.tex}{%
\pdfpagesattr{/CropBox [125 82 490 688]} 
}{%
}

\title{String Solving with Stabilization and Transducers
\\(Technical Report)
}

\author{
  David Chocholatý\orcidlink{0009-0006-5614-1592}\inst{1} \and
  Vojtěch Havlena\orcidlink{0000-0003-4375-7954}\inst{1} \and
  Lukáš Holík~\href{mailto:lukasholik@cc.aau.dk}{\Letter}\orcidlink{0000-0001-6957-1651}\inst{1,2} \and\\
  Juraj Síč\orcidlink{0000-0001-7454-3751}\inst{1} \and
  Michal Šedý\orcidlink{0009-0004-0091-0546}\inst{2}
}

\institute{
  Faculty of Information Technology, Brno University of Technology, Brno, Czech Republic \\
  \email{\{ichocholaty, ihavlena, holik, sicjuraj\}@fit.vut.cz}
  \\ \and
  The Technical Faculty of IT and Design, Aalborg University, Aalborg, Denmark \\
  \email{\{lukasholik,medy\}@cc.aau.dk}
}

\begin{document}
\maketitle

\scaledvspace{-0.8cm}
\begin{abstract}
	We generalize an efficient automata-based approach to string constraint solving---the \emph{stabilization-based} method behind the solver \ziiinoodler---to support relational constraints represented by finite-state transducers (useful, for example, for modelling \replaceall constraints).
	We focus on an efficient treatment of length constraints by reducing the need for expensive concatenation elimination, which is a major bottleneck in automata-based string solving.
	We also propose powerful heuristics that significantly improve performance in practice.
	Implemented on top of \ziiinoodler, our method vastly outperforms existing solvers on benchmarks with relational constraints---it solves more instances and runs orders of magnitude faster.
\end{abstract}

\pagestyle{plain}

\scaledvspace{-0.5cm}
\section{Introduction}
\scaledvspace{-0.3cm}
String constraint solving has been studied intensively, with a number of competing approaches and solvers, e.g., \cite{AnthonyTowards2016,AnthonyReplaceAll2018,AnthonyComplex2019,ChainFree,VeanesDerivatives21,Z3str3RE,ganesh_expressive_23,BerzishDGKMMN23,cvc422,cvc5,AnthonyRegex2022,Flatten,DayEKMNP19,string_sat_23,RegularPropagationAnthonyJez25,Eisenhofer2026}.
As the field matures, it finds practical applications (e.g., analysis of resource access policies \cite{Rungta2022,stringsAWS18}) and approaches the point where it can fully fulfill its promises in program analysis and verification.
Automata-based approaches, originating from works such as \cite{Stranger,AutomataSplitting}, have always excelled in their generality, the range of handled constraints, and strong decidability guarantees (e.g., \cite{ChainFree,AnthonyComplex2019,AnthonyReplaceAll2018,RegularPropagationAnthonyJez25}).
Recent advances in the \emph{stabilization} algorithm \cite{NoodlerFM23,NoodlerOOPSLA23,NoodlerSAT24}, implemented in the solver \ziiinoodler~\cite{NoodlerTACAS24,NoodlerTACAS25}, have further elevated these approaches beyond the state-of-the-art in efficiency, as witnessed by the results of SMT-COMP in the last two years \cite{smtcomp24strings,smtcomp25strings} and by independent comparisons (e.g., \cite{hornstr,ostrich2,lotz2025s2s}).

In this paper, we extend the stabilization algorithm to handle \emph{transducer-represented relational constraints}, which are essential for modelling functions such as \replaceall, HTML encoding and decoding, character escaping and unescaping, and other transformations that arise in practical applications such as software analysis and policy verification.
We present a decision procedure for the \emph{chain-free fragment} of string formulae~\cite{ChainFree}, the largest practical fragment of string logic with transducers, equations, and length and regular constraints.
The main technical challenge we address is to efficiently integrate the stabilization algorithm with transducer constraints, and its combination with length constraints. In automata-based methods, length reasoning relies on eliminating concatenation,
a process that often causes severe combinatorial explosion due to the explicit enumeration of alignments/boundary interleaving of variables on opposing sides of equations/transducer constraints.
For example, a transducer constraint $T(xx,uv)$ or an equation $xx=uv$ would generate a case split of size 4, with cases having up to twice as many variables.
Thus, a sequence of such steps can generate an exponential number of cases, each exponentially larger in the number of variables and the length of the concatenation terms.
Stabilization is much cheaper, but its output cannot be used to resolve the length constraints.
We therefore propose a low-level combination of the two, which uses stabilization whenever possible and eliminates concatenation only from those sub-problems where interaction with the length constraints is unavoidable.
We further introduce several heuristics that make solving transducer constraints efficient in cases common in practice, which can
\begin{enumerate*}
	\item simplify certain ``homomorphic'' transducer constraints to a conjunction of simpler ones,
	\item replace transducer constraints with constant input/output by simpler equation constraints, and
	\item attempt to conclude (un)satisfiability of generally undecidable cyclic transducer constraints.
\end{enumerate*}

Note that our handling of negated constraints appears to be more general than in other takes on the automata approach.
From \cite{NoodlerOOPSLA23}, we take the ability to handle unrestricted negated equations.
We also observe that functional transducer constraints (the encoded relation is a function), covering nearly all known practical uses of transducers in string solving, can also be negated.
Specifically, a functional transducer constraint $T(x,y)$ can be rewritten as $T(x,y') \land y' \neq y$ where $y'$ is a fresh variable.
Although this is technically simple, to our knowledge, this observation does not appear to be widely known and has not been widely exploited in existing string solvers:
previous works treat transducers as non-negatable (complements of rational relations are generally not rational \cite{ChainFree,AnthonyTowards2016}). 

We have implemented our methods on top of the string solver \ziiinoodler and evaluated them across all available benchmarks that involve transducer constraints.
They outperform other solvers in almost all benchmark categories in the number of solved instances and run orders of magnitude faster than the second-best solver.

\paragraph{Related work.}
This work continues the branch \cite{NoodlerFM23,NoodlerFMJournal25,NoodlerOOPSLA23,NoodlerTACAS24,NoodlerSAT24,NoodlerTACAS25} of the automata approach \cite{AutomataSplitting} behind one of the most efficient string solvers \ziiinoodler, and adds an ability to handle transducer-definable constraints such as \replaceall. Our decision procedure handles the chain-free fragment \cite{ChainFree} of equations, transducers, regular, and length constraints, with a focus on an efficient combination with length constraints. Our approach is inspired by the combination of the stabilization method with elimination of concatenation from \cite{NoodlerOOPSLA23}, which does not consider transducers.
The automata approach was extended with transducers also in the branch of works \cite{AnthonyTowards2016,AnthonyComplex2019,AnthonyReplaceAll2018,AnthonyRegex2022,AnthonyInteger2020,RegularPropagationAnthonyJez25} behind the solver \ostrich \cite{ostrich2}. It handles the straight-line string logic, which is less expressive than chain-free fragment wrt. equations, regular constraints, lengths, and transducer constraints. On the other hand, they can handle additional types of constraints, including forms of \replaceall that we do not handle, e.g., with extended regular expressions in the replacement pattern.
%
The line of work was also recently extended to the chain-free fragment, called orderable in \cite{RegularPropagationAnthonyJez25}, but without focusing on efficient handling of length constraints.
%
Our method significantly outperforms these works on benchmarks we are aware of.
Transducers without length constraints were also considered in \cite{StringAFA}, and in \cite{Flatten}, experimenting with abstracting string constraints to Parikh images.
The lines of Z3Str* solvers \cite{Z3Str3,Z3Str4,Z3str3RE} handle the \replaceall function using
symbolic word equations combined with automata-based reasoning---progressing
from lazy decomposition
to regex-aware transducer semantics in \textsc{Z3strRE}
and
recursive sequence definitions.
\textsc{CVC4/5} apply a lazy model-refinement, unfolding
\replaceall into concatenation and regular constraints only when needed, guided by
context-dependent simplification \cite{cvc422}.
Earlier methods such as \cite{Stranger,S3,joxan-cav16}
also supported \replaceall,
relying on an automata-based over-approximation for PHP string analysis,
or used a progressive search over recursively defined string operations.
%

\section{Preliminaries}
\scaledvspace{-0.25cm}


\paragraph{Sets, strings and languages.}
$\numbersNatural$, $\mathbb{Z}$, and $\bool = \{ \top, \bot
	\}$ are sets of natural numbers (with~0), integers, and boolean values.
$\Sigma^*$ is a set of all finite sequences $w = a_1 \ldots a_n$ of \emph{symbols} from $\Sigma$,
called \emph{strings} or \emph{words} over $\Sigma$, with their \emph{length} defined as $|w| = n$ and $|\varepsilon| = 0$ for the \emph{empty string} $\varepsilon \in \Sigma^*$. $\SigmaEps = \Sigma \cup \{ \varepsilon \}$.
A \emph{concatenation} of strings $u$ and $v$ is denoted $u\concat v$ ($uv$ for short);
$\varepsilon$ is the neutral element.
We use the same notation for concatenating tuples.
We define an \emph{iteration} of a word $w$ as $w^0 = \varepsilon$ and $w^{k+1} = w^{k}\concat w$ for $k \in \numbersNatural$.
A \emph{language} $\lang$ over $\Sigma$ is a subset of $\SigmaStar$.
A \emph{concatenation} of languages is defined as $L_1\concat L_2 = \{ u\concat v \mid u\in L_1, v\in L_2 \}$,
where at the place of $L_1$ or $L_2$, we may write just $w$ to mean language $\{w\}$.

\scaledvspace{-2mm}
\paragraph{Relations.}
Given a tuple $\vecx$, $\vecx_i$ denotes its $i$-th item.
A \emph{projection} of an $n$-ary relation $\relation$ to the set of positions $I\subseteq\{1,\ldots n\}$ is $\projToSet{I}{\relation} = \{ (\vecx_i)_{i \in I} \,\mid\, \vecx \in \relation \}$ and we write $\projOutSet{j}{\relation}
$ to denote $\projection{\{1,\ldots,n\}\setminus \{j\}}$, the \emph{elimination} of a position $j$ from $R$.
The \emph{synchronization} of $R$ with an $m$-ary relation $\relation'$ at positions $i,j$
is the relation $\syncOnTapes{\relation}{\relation'}{i}{j} =
	\{\projOutSet{i}{\vecx}\cdot\vecy\mid \vecx\in \relation,\vecy\in \relation',\vecx_i=\vecy_j\}$.
The~\emph{composition} of relations is then
$\composeOnTapes{\relation}{\relation'}{i}{j} = \projOutSet{(n+j-1)}{\syncOnTapes{\relation}{\relation'}{i}{j}}$ (the synchronized position, the original $j$-th track of $\relation'$, is eliminated from the result).
When $i$ is the arity of $\relation$ and $j=1$ then we get the \emph{normal rational composition} $\compose{\relation'}{\relation}$.
For a set $S$, $\relationImg{\relation}{S} = \compose{S}{R}$
is the \emph{image} of $S$ under $\relation$.
We call the domain of a binary relation $\relation$ an input and the range of $\relation$ an output, and its \emph{inversion} is the relation $\preImg{\relation} = \{(b,a)\mid (a,b)\in \relation\}$.

\scaledvspace{-2mm}
\paragraph{Automata and transducers.}
A \emph{(nondeterministic) finite automaton} ((N)FA) over an alphabet $\Sigma$ is a tuple \mbox{$A=(Q,\Sigma,\delta,I,F)$} where
$Q$ is a finite set of states,
$\delta$ a set of transitions $\move q \aore r$ where $q,r\in Q$ and $\aore\in\Sigma$,
$I\subseteq Q$ a set of \emph{initial states}, and
$F\subseteq Q$ a set of \emph{final states}.
A~run of~$\aut$ over a~word~$w = a_1 \concats a_n \in \Sigma$
is a~sequence
$\move{p_0}{\aore_1}{p_1}
	\move{}{\aore_2}{}
	\ldots
	\move{}{\aore_n}{p_n}$ of transitions from $\Delta$.
The run is \emph{accepting} if $q_1 \in I$ and $q_n \in F$.
A \emph{language $\langOf{\aut}$ accepted by} $\aut$ is the set of all words for which $\aut$ has an accepting run.
\emph{Regular languages} are languages accepted by FAs.
An $n$-tape \emph{finite transducer} (FT) $\ft{T} $ over $\Sigma$ is a finite automaton over $\SigmaEpsTimes n$.
A~language $L$ over $\SigmaEpsTimes{n}$ represents the relation $\relationOf{L}\subseteq \SigmaEpsTimes{n}$ containing a tuple $(w_1,\ldots,w_n)$ iff there is a word $\veca^1 \cdots \veca^k \in L$ such that for all $1\leq i \leq n$, $w_i = \veca^1_i \concat \ldots \concat \veca_i^k$.
$\relationOf{L(\ft{T})}$ is the relation \emph{recognized} by the transducer.
The relations recognized by the transducers are called \emph{rational}.
If the relation recognized by a binary transducer is a \emph{function}, we call the transducer \emph{functional}.

\scaledvspace{-2mm}
\paragraph{Automata/transducer operations.}
We use several non-standard auto\-mata/trans\-ducer operations.
First, given two
FAs $\aut= (Q,\Sigma,\Delta,I,F)$, $\aut'= (Q',\Sigma,\Delta',I',F')$ and a delimiter symbol $\sharp$,
we construct their \emph{$\sharp$-concatenation}
$\aut \concat \sharp \concat \aut' = (Q \uplus Q', \Delta \uplus \Delta' \uplus \{ \move{p}{\sharp}{q} \mid p \in F, q \in I' \}, I, F')$ that accepts the language $L(A)\concat\sharp\concat L(A')$.

Second, the relational synchronization is implemented with transducers as the standard product construction:
For an $n$-tape transducer $T = (Q,\SigmaEpsTimes n,\delta,I,F)$ and an $m$-tape transducer $\ft{T}' = (Q',\SigmaEpsTimes n,\delta',I',F')$,
$\syncOnTapes{\ft{T}}{\ft{T}'}{i}{j}$ is the $(n+m-1)$-tape transducer
$(Q\times Q', \SigmaEpsTimes n, \delta^{\times}, I\times I', F\times F')$ where
$\delta^{\times}$ contains a transition $\ftTransition{(p, p')}{{\projOutSet{i}{\vecx}\cdot\vecy}}{(q, q')}$ in the following three cases:
\begin{enumerate*}
	\item $\ftTransition{p}{\vecx}{q} \in \delta$ with $\vecx_i=\varepsilon$, $\vecy=\varepsilon^m$, and $p' = q'$, or
	\item $\ftTransition{p'}{\vecy}{q'} \in \delta'$ with $\vecy_j=\varepsilon$, $\vecx=\varepsilon^n$, and $p = q$, or
	\item $\ftTransition{p}{\vecx}{q} \in \delta$, $\ftTransition{p'}{\vecy}{q'} \in \delta'$, $\vecx_i = \vecy_j \in \SigmaEps$.
\end{enumerate*}
Elimination $\projOut{J}$ is implemented simply by erasing the positions $J$ of each symbol in every transition.
The relational composition is then implemented as $\composeOnTapes{\ft{T}}{\ft{T}'}{i}{j} = \projOut{(n+m-1)}(\syncOnTapes{\ft{T}}{\ft{T}'}{i}{j})$.
Given a regular language $S$, we lift the relational image to the transducers as $\relationApplyOn{\ft{T}}{S} = \relationImg{\relationOf{\langOf{\ft{T}}}}{S}$.
The inverse $T^{-1}$ of a binary transducer $T$ is implemented by swapping the symbols of the two tapes on all transitions.

Third, we use \emph{transducers with delimiters} to delimit segments of accepted words.
A \emph{transducer over $\Sigma$ with delimiters} $\markers$ (\emph{$\markers$-delimited transducer}), $\markers\cap\Sigma = \emptyset$, is an automaton over $\SigmaEpsTimes n \cup \{\sharp^n\mid \sharp\in \markers\}$: if a delimiter $\sharp\in \markers$ appears on one tape of a transition, it is on all its tapes.
\emph{Synchronization of $\markers$-delimited transducers},
$\syncOnTapesMarked{\ft{T}}{\ft{T}'}{i}{j}{\markers}$, preserves the delimiters of both transducers. It is defined as $\syncOnTapes{\ft{T}}{\ft{T}'}{i}{j}$ above, but with two additional cases that generate its transitions:
\begin{enumerate*}[start=4]
	\item $\ftTransition{p}{\vecx}{q} \in \delta$ with $\vecx = \sharp^n$, $\vecy=\sharp^m$, $\sharp \in \markers$, and $p' = q'$, or
	\item $\ftTransition{p'}{\vecy}{q'} \in \delta'$ with $\vecy = \sharp^m$, $\vecx=\sharp^n$, $\sharp \in \markers$, and $p = q$.
\end{enumerate*}

\scaledvspace{-2mm}
\paragraph{String constraints.}
We use a minimal string constraint language built from \emph{atomic string constraints}
over an alphabet~$\Sigma$, string variables~$\vars$, and integer variables $\varslia$.
We define an \emph{equation} $\equationNodeArgs{t_s}{t_s'}$, a \emph{rational/transducer constraint} $T(t_s,t_s')$,
a \emph{regular constraint} $t_s \in \regex$,
and a length constraint $t_i \leq t_i'$ where
$t_s,t_s'\in \vars^*$ are \emph{string terms}, $t_i,t_i'$ linear integer arithmetic (LIA) terms over variables $\varslia\cup\{\lenvar{x}\mid x\in\vars\}$,
$\regex$ a regular language represented by an FA,
and $\replacement$ a rational relation represented by a transducer.
For $\equationNodeArgs{t_s}{t_s'}$ and $\transductionArgs{t_s}{t_s'}$,
$t_s$ is the \emph{input} (term) and $t_s'$ the \emph{output} (term), also called the \emph{left} and \emph{right} side/terms, respectively.
In this paper, the orientation of the equations and the transducer constraints is significant, and they are not commutative.
We identify automata with the accepted languages and transducers with the recognized relations.
%
%
A general \emph{string constraint} $\varphi$ over $\Sigma$ and $\vars$ is a boolean combination of atomic constraints.
$\varsOf{\varphi} \subseteq \vars$ denotes the set of string variables occurring in~$\varphi$; additionally, if $\lenvar{x}$ occurs in $\varphi$, then $x \in \varsOf{\varphi}$ and we call $x$ a \emph{length variable}.

An \emph{assignment} $\ass = \ass_s \cup \ass_i$ consists of a \emph{string assignment} $\ass_s \colon \vars \to \Sigma^*$ and an \emph{integer assignment} $\ass_i \colon \varslia \to \mathbb{Z}$.
\emph{Satisfaction} of a formula $\varphi$ by an assignment $\ass = \ass_s \cup \ass_i$, written $\ass\models\varphi$, is defined in the usual inductive way for boolean connectives from atomic expressions:
$\ass \models \equationNodeArgs{\sterm}{\tterm}$ iff
$\ass_s(\sterm) = \ass_s(\tterm)$ where $\ass_s(\tterm')$ for a term $\tterm' =
	x_1\ldots x_n$ is defined as $\ass_s(x_1)\cdot\ldots\cdot\ass_s(x_n)$,
$ \ass \models x\in \regex$ iff $\ass_s(x)\in \regex$,
$ \ass \models T(s,t)$ iff $(\ass_s(x),\ass_s(y))\in T$,
and $\ass \models t_i \leq t_i'$ iff $\ass'(t_i) \leq \ass'(t_i')$ where $\ass' = \ass_i \cup \{\lenvar{x} \mapsto |\ass_s(x)| \mid x\in\vars\}$. 
We call~$\formula$ \emph{satisfiable} if it has a \emph{solution}, i.e., a satisfying assignment.

A formula $\varphi$ is an \emph{existential extension} of $\psi$ iff $\varphi$ becomes equivalent to $\psi$ after existentially quantifying variables not appearing in $\psi$.
A \emph{cube} is a conjunction of atoms and negated atoms.
A \emph{positive cube} $\cube$ has no negation.
For $\cube$, we use a set of equations $\Eqof\cube$, regular constraints $\Regof\cube$, transducer constraints $\Trof\cube$, length constraints $\Lenof\cube$, equations and transducer constraints together $\EqTrof\cube$, and all constraints except length constraints $\Strof\cube$.
We abuse notation to identify positive cubes with sets of their constraints.
%
Without loss of generality, we assume cubes with \emph{normalized regular constraints} with exactly one regular constraint $x\in \regex_x$ per string variable $x\in \varsOf{\formula}$ (all regular constraints $x \in \regex_i$ can be replaced by $x \in \bigcap_{i=0}^n \regex_i$, obtained by the standard FA intersection construction).
We abuse the notation and use $\Regof\cube(x)$ to denote $\regex_x$ and for a string term $t_s = x_1\concats x_n$, we define $\Regof\cube(t_s) = \Regof\cube(x_1) \concats \Regof\cube(x_n)$.
When clear from the context, we may omit the subscript $\cube$.

\scaledvspace{-2mm}
\paragraph{Transducer constraints as \replaceall.}
In practice, the string solver $\ziiinoodler$ supports the rich string logic of SMT-LIB standard library~\cite{SMTLIB-Strings}.
The new ability to handle transducer constraints allows $\ziiinoodler$ to use atomic string constraints of the form
$t' = \replaceall(t, L, w)$ 
%
where $t,t'\in(\Sigma\cup\vars)^*$ are the \emph{subject} and the \emph{result}, respectively,
$L\subseteq\Sigma^*$ is the \emph{pattern}
(represented by a regular expression),
and $w\in\Sigma^*$ is the \emph{replacement}.
Under a string valuation $\ass$, $\replaceall(t, L, w)$
returns the string obtained by replacing, left-to-right, each
\begin{wrapfigure}[10]{r}{0.25\textwidth}
	\vspace{-9.5mm}
	\hspace{-1mm}
	\centering
	\begin{tikzpicture}[>=stealth',shorten >=0pt,auto,node distance=15mm,scale=0.8,every state/.style={minimum size=5mm,inner sep=1pt}]
  \node[state, accepting, initial below, initial text={}] (q0) {$q_0$};
  \node[state, below left of=q0] (q1) {$q_1$};
  \node[state, below right of=q0] (q3) {$q_3$};
  \node[state, below of=q0, node distance=12mm] (q2) {$q_2$};

  \draw[->] (q0) edge node[left,pos=0.3]{${<}/\&$} (q1);
  \draw[->] (q1) edge node[left,pos=0.5,yshift=-3mm,xshift=3mm]{$\epsilon/l$} (q2);
  \draw[->] (q2) edge node[right,pos=0.5,yshift=-3mm,xshift=-3mm]{$\epsilon/t$} (q3);
  \draw[->] (q3) edge[left] node[right,pos=0.7]{$\epsilon/;$} (q0);

  \draw[->] (q0) edge[loop, in=100, out=80, looseness=9] node[pos=0.5, above]{$a/a$} (q0);
  \draw[->] (q0) edge[loop, in=130, out=110, looseness=9] node[pos=0.5, above left]{$\&/\&$} (q0);
  \draw[->] (q0) edge[loop, in=160, out=140, looseness=9] node[pos=0.5, left]{$l/l$} (q0);
  \draw[->] (q0) edge[loop, in=50, out=70, looseness=9] node[pos=0.5, above right]{$t/t$} (q0);
  \draw[->] (q0) edge[loop, in=20, out=40, looseness=9] node[pos=0.5, right]{$;/;$} (q0);
\end{tikzpicture}

	\vspace{-0.25cm}
	\caption{Transducer $T$ for $vz = \texttt{replace-}$ $\texttt{All}(xy, <, \text{\&lt;})$.%
	}
	\label{fig:replaceall}
\end{wrapfigure}
shortest non-empty match of $L$ in $\ass(t)$ by $w$.
We note that if $L = \{w'\}$ is a singleton language, we abuse notation and write $\replaceall(t, w, w')$.
\replaceall
operations can also be  \emph{nested} (the subject $t$ can itself be an application of a \replaceall operation), allowing, for instance, to specify character encoding by its HTML codes.
We use the techniques of \cite{ReplaceAllToTransducer} to implement the translation of the \replaceall functions into transducer constraints.

\scaledvspace{-2mm}
\beginexample
\label{example:modelling-replace-operations}
Consider a string constraint $vz = \replaceall(xy, <, \text{\&lt;})$ over an alphabet $\Sigma = \{ a, <, \&, l, t, ; \}$ with string variables $x$, $y$, $v$, $z$.
It replaces all occurrences of $<$ in $xy$ by $\text{\&lt;}$ and stores the result into $vz$.
This is a common pattern occurring in the sanitization of user inputs in web applications when using the HTML escaping function of
various programming languages. In our approach, we model this string constraint using the transducer constraint $T(xy,vz)$ where $T$
is given in~\cref{fig:replaceall}.
\sectionEndSymbol

\scaledvspace{-0.5cm}
\section{Chain-free String Constraints} \label{sec:chain-free}
\scaledvspace{-0.25cm}

A string theory solver in the CDCL(T) framework solves cubes.
We concentrate on the chain-free fragment of cubes~\cite{ChainFree}.
We first define positive chain-free cubes, and later discuss how negative atoms can be reduced to a positive chain-free cube.

\scaledvspace{-2mm}
\paragraph{Positive chain-free cubes.}
We adapt the definition of chain-freeness from \cite{NoodlerFM23}, which is simpler than the original \cite{ChainFree}, and extend it with transducer constraints.
The definition is based on the notion of an \emph{inclusion graph} $G_\formula = (V,E)$ of a positive cube $\cube$, which is also an integral part of our decision procedure.
The nodes $V$ of the inclusion graph are the equations and the transducer constraints of $\cube$.
The graph has an edge $(c, c')\in E$ iff there is a shared variable between the output term of $c$ and the input term of $c'$.
\begin{figure}[t]
	\begin{subfigure}[b]{0.49\linewidth}
		\centering
		\begin{tikzpicture}[>=stealth',shorten >=0pt,auto,node distance=20mm,scale=0.8,baseline=(v<u.base)]
    \tikzset{every node/.style={rounded corners=1mm}}

    \node[draw] (u<z) {$\equationNodeArgs{z}{u}$};
    \node[draw,right of=u<z] (v<u) {$\equationNodeArgs{u}{v}$};
    \node[draw,right of=v<u] (xy<uvwr) {$\transs^{-1}(uvwr, xy)$};

    \draw
        (u<z) edge[->] (v<u)
        (u<z) edge[->,out=-20,in=-160] (xy<uvwr)
        (v<u) edge[->] (xy<uvwr);

\end{tikzpicture}
		\caption{Inclusion graph for $\formula_1$.}
		\label{fig:constraint_graph_examples-acyclic}
	\end{subfigure}
	\hfill
	\begin{subfigure}[b]{0.49\linewidth}
		\centering
		\begin{tikzpicture}[>=stealth',shorten >=0pt,auto,node distance=20mm,scale=0.8,baseline=(v<u.base)]
    \tikzset{every node/.style={rounded corners=1mm}}



    \node[draw] (u<z) {$\equationNodeArgs{z}{u}$};
    \node[draw,right of=u<z] (v<u) {$\equationNodeArgs{u}{v}$};
    \node[draw,right of=v<u] (yz<xuv) {$\transs(xuv, yz)$};

    \draw 
        (u<z) edge[->] (v<u)
        (u<z) edge[->,out=-20,in=-160] (yz<xuv)
        (u<z) edge[->,out=-20,in=-160] (yz<xuv)
        (u<z) edge[<-,out=20,in=160] (yz<xuv)
        (v<u) edge[->] (yz<xuv)
	;

\end{tikzpicture}
		\caption{Inclusion graph for $\formula_2$.}
		\label{fig:constraint_graph_examples-cyclic}
	\end{subfigure}
	\caption{
		Examples of inclusion graphs.
		(\subref{fig:constraint_graph_examples-acyclic})
		An acyclic inclusion graph for a chain-free $\formula_1\colon \equationNodeArgs{z}{u} \wedge \equationNodeArgs{u}{v} \wedge \replacementArgs{xy}{uvwr}$.
		(\subref{fig:constraint_graph_examples-cyclic})
		A~cyclic inclusion graph for non chain-free $\formula_2\colon \equationNodeArgs{z}{u} \wedge \equationNodeArgs{u}{v} \wedge \replacementArgs{xuv}{yz}$.
	}
	\label{fig:constraint_graph_examples}
	\vspace{-5mm}
\end{figure}

We say a positive cube $\cube$ is \emph{directly chain-free} iff
\begin{enumerate*}
	\item its inclusion graph is acyclic, and
	\item no variable occurs multiple times in the inputs of constraints (it can occur in only one input, with at most one occurrence).
\end{enumerate*}
%
Note that any chain-free cube can be transformed into an equivalent directly chain-free cube~\cite{NoodlerFM23}.
From now on, we presume that all chain-free cubes are in the directly chain-free form.
We define a \emph{dual variant} of a formula as a formula obtained by replacing some constraints by their duals,
where a \emph{dual} of an equation $\dual{\equationNodeArgs{s}{t}} = \equationNodeArgs{t}{s}$
and a dual of a transducer constraint $\dual{\transductionArgs{s}{t}} = \transductionArgsInv{t}{s}$.
We call a positive cube \emph{chain-free} iff it has a directly chain-free dual variant.
\cref{fig:constraint_graph_examples} shows an example of a (non-)chain-free cube and its graph.

We note that our definition of the inclusion graph differs from that in \cite{NoodlerFM23}, especially in the cyclic case: nodes on a cycle would have their duals in the graph as well.
The difference is related to technicalities and the presentation of the decision procedure.
It does not however influence the notion of chain-freeness.

\begin{restatable}{lemma}{lemmaChainFreeEquivalence}\label{lemma:chain-free-equivalence}
	Our definition of a chain-free cube is equivalent to the definition of a positive chain-free cube in~\cite{ChainFree}.
\end{restatable}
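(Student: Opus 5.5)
The plan is to prove both inclusions between the two notions, working directly with the \emph{splitting graph} of~\cite{ChainFree}. In~\cite{ChainFree} a positive cube is normalized so that every equation $s = t$ is the rational constraint $\mathit{Id}(s,t)$ with the identity transducer; regular and length constraints play no role there. I will use the splitting graph in the following form, and the first step is to check this formulation against~\cite{ChainFree} and adjust the argument if it differs. The vertices are the variable occurrences, i.e., triples (constraint $c$, side $\sigma\in\{\mathrm{in},\mathrm{out}\}$, position). There is an edge from an occurrence $p$ of $x$ in side $\sigma$ of $c$ to every occurrence $q$ on the opposite side $\bar\sigma$ of some constraint $c'$, whenever a further occurrence $p'\neq p$ of the same variable $x$ lies on side $\sigma'$ of $c'$. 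Informally, splitting the side of $c'$ that contains $x$ forces a split of its opposite side. The cube is chain-free in the sense of~\cite{ChainFree} iff this graph is acyclic. Dualizing a constraint swaps its sides and inverts the transducer ($\mathit{Id}^{-1}=\mathit{Id}$). It therefore only renames vertices, so acyclicity of the splitting graph is invariant under taking dual variants. Hence it suffices to show: a cube has an acyclic splitting graph iff some dual variant is directly chain-free.

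($\Leftarrow$) Assume $\cube$ is directly chain-free, with acyclic inclusion graph $G_\cube$ and every variable occurring at most once among all inputs. I will show that every edge $p\to q$ of the splitting graph leads from an occurrence in $c$ to an occurrence in $c'$ with either $c=c'$ or $(c,c')\in E$ or $(c',c)\in E$. I will also show that the direction is forced by the input-linearity condition. The argument runs as follows. If $x$ occurred twice on input sides, condition~(ii) would be violated. So two occurrences of $x$ lie either both in outputs, or one in an output of $c_1$ and the other in the input of $c_2$, which gives an edge $(c_1,c_2)\in E$. I then define a potential on vertices: the topological rank of the constraint in $G_\cube$, refined so that input occurrences rank above output occurrences of the same constraint. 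The goal is to show that this potential strictly increases along every splitting edge, which rules out cycles.

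($\Rightarrow$) Assume the splitting graph is acyclic. We must choose an orientation of every constraint. First, if some variable occurred twice within a single side, or on the same side of two constraints, then I expect an immediate cycle between the corresponding opposite sides. So at most one occurrence of each variable may be designated as an input, and the orientation must respect a topological order $\prec$ of the splitting graph. Concretely, for each constraint I will compare the $\prec$-minimal occurrences on its two sides and declare the earlier side to be the input. It then remains to show two things. First, no variable ends up in two inputs; this follows from the edge rule, since two input occurrences of $x$ would yield mutual edges between their opposite (output) sides. Second, every inclusion edge $(c,c')$ goes forward in $\prec$, which gives acyclicity of $G_\cube$.

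The main obstacle is this last step: a single global orientation must work simultaneously for all constraints. In particular, a constraint whose two sides are incomparable in $\prec$ needs care. If the minimum-occurrence rule fails, I would instead build the orientation greedily along a topological order of constraints, which are contracted from their occurrences. I would orient each constraint so that its input shares no variables with the outputs of constraints oriented later. Acyclicity of the splitting graph guarantees that such a choice always exists.
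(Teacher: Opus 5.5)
You reduce the statement to ``the splitting graph is acyclic iff some orientation is directly chain-free''. That reduction is correct, provided the edge target is read as the side opposite the \emph{other} occurrence $p'$, i.e.\ $\overline{\sigma'}$ rather than $\bar\sigma$; otherwise the graph would depend on the orientation. A self-contained proof of this kind would be a legitimate alternative to the paper, which only transfers Lemma~11 of the journal version of the FM'23 stabilization paper. However, the key step of ($\Rightarrow$) fails, because your orientation rule is backwards. Take $c_1=T_1(x,y)$ and $c_2=T_2(x,z)$. Let $p_i$ be the occurrence of $x$ in $c_i$, and $q_y,q_z$ the occurrences of $y,z$. The only splitting edges are $p_1\to q_z$ and $p_2\to q_y$. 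So $p_1,p_2,q_y,q_z$ is a topological order, and declaring the side with the earlier minimum to be the input puts $x$ into both inputs. Your justification (``mutual edges between their opposite sides'') is wrong: each occurrence of $x$ points into the side opposite the \emph{other} occurrence, which closes no cycle. For the same reason your preliminary claim is false. A variable repeated within one side, or occurring in two constraints, does not force a cycle: $R(xx,y)$ and the example above are both chain-free. The only local obstruction is a variable on both sides of one constraint, which gives a self-loop. The repair is to reverse the rule, which also makes your greedy fallback unnecessary. Let $\min(c)$ be the $\prec$-least occurrence of $c$, declare its side the \emph{output}, and let an empty side count as $+\infty$.
\begin{itemize}
\item Suppose $x$ had input occurrences $p_1\in\mathit{in}(c_1)$ and $p_2\in\mathit{in}(c_2)$ with $c_1\neq c_2$. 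Then $p_2$ has edges to all of $\mathit{out}(c_1)$, so $p_2\prec\min(c_1)\prec p_1$, and symmetrically $p_1\prec p_2$, a contradiction.
\item If $c_1=c_2$, then directly $p_1\prec\min(c_1)\prec p_1$.
\item An inclusion edge $(c,c')$ witnessed by $p\in\mathit{out}(c)$ and $p'\in\mathit{in}(c')$ gives edges from $p$ to all of $\mathit{out}(c')$. Hence $\min(c)\preceq p\prec\min(c')$, so the inclusion graph is acyclic.
\end{itemize}

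In ($\Leftarrow$), your potential (topological rank, with inputs above outputs within a constraint) is not monotone. Your claim that every splitting edge joins equal or $G$-adjacent constraints is also false. Take the directly chain-free cube $c_1=T_1(u,x)$, $c_2=T_2(x,y)$, with $G$-edge $(c_1,c_2)$. The occurrence of $x$ in $\mathit{in}(c_2)$ has an edge to $u\in\mathit{in}(c_1)$, so the rank drops. In $c_1=T_1(u,x)$, $c_2=T_2(v,x)$, the occurrence of $x$ in $\mathit{out}(c_1)$ has an edge to $v\in\mathit{in}(c_2)$, although $c_1,c_2$ are not adjacent in $G$. This is the ``both in outputs'' case you noticed but did not follow through. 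The correct case analysis is as follows.
\begin{itemize}
\item For an input occurrence of $x$, all other occurrences lie in outputs of strict $G$-predecessors, so its edges go only to inputs of strict predecessors.
\item An output occurrence has edges to outputs of strict successors, or to inputs of arbitrary constraints.
\item No edge leads from an input occurrence to an output occurrence.
\end{itemize}
Hence assign $(0,\mathrm{rank}(c))$ to output occurrences and $(1,-\mathrm{rank}(c))$ to input occurrences, compared lexicographically. This potential strictly increases along every splitting edge, which gives acyclicity.
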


\paragraph{Chain-free formulae with negation.}
Chain-freeness can absorb a broad class of negated constraints by transforming them into positive constraints without affecting chain-freeness.
Negation of regular constraints can be eliminated by standard automata complementation.
To handle disequations (negated equations), we employ the approach of \cite{NoodlerSAT24}, relying on encoding disequations using $\tocode$ functions returning the Unicode code-point of a character.
We convert a disequation $\disequationNodeArgs{z_1\concats z_i}{\bar{z}_1\concats\bar{z}_j}$ to $\lenvar{z_1} + \concats + \lenvar{z_i} \neq \lenvar{\bar{z}_1} + \concats +\lenvar{\bar{z}_j}
	\vee ( \equationNodeArgs{z_1\concats z_i}{x_1a_1x_2} \land \equationNodeArgs{\bar{z}_1\concats\bar{z}_j}{y_1a_2y_2} \land \lenvar{x_1} = \lenvar{y_1} \land a_1 \in\Sigma \land a_2 \in \Sigma \land \tocode(a_1) \neq \tocode(a_2))$.
Since $\tocode(a_1) \neq \tocode(a_2)$ cannot generate equations that break chain-freeness, it is used instead of the more natural $\equationNodeArgs{a_1}{a_2}$.
$\tocode(a_1) \neq \tocode(a_2)$ is handled by a LIA solver after $\tocode(a)$ is transformed into a LIA constraint~\cite{NoodlerSAT24}.
Elimination of negation from general transducer constraints by complementation is impossible (complements of rational relations are generally not rational~\cite{ChainFree,AnthonyTowards2016}).
Hence, the chain-free formulae cannot contain negated general transducer constraints.
However, we notice that negation can be eliminated from \emph{functional transducers} (the encoded relation is a function): $\neg T(x,y)$ can be converted to an equivalent $T(x,z) \land \disequationNodeArgs{z}{y}$.
Both transformations for negated constraints preserve chain-freeness, i.e., a positive chain-free formula remains chain-free after adding the transformed version of any negated constraint.
Thus, negation can be easily absorbed by chain-freeness, provided that the negated transducers are functional.
Hence, we can extend the decidability result from \cite{ChainFree} with the more general handling of negation.

\begin{restatable}{theorem}{lemmaChainFreeWithNegationsAndTransducers}\label{lemma:chain-free-with-negations-and-transducers}
	Satisfiability of a cube where negated transducer constraints are functional and where the positive constraints are chain-free is decidable.
\end{restatable}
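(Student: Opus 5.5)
The plan is to reduce the given cube to finitely many positive chain-free cubes, possibly enriched with the length and $\tocode$ constraints that the decidability result of \cite{ChainFree} (as extended in \cite{NoodlerSAT24}) already covers. We then invoke that result via \cref{lemma:chain-free-equivalence}.

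Let the cube be $\cube = \cube^+ \wedge \cube^-$, where $\cube^+$ is positive and chain-free and $\cube^-$ collects the negated atoms. Each kind of negated atom is eliminated separately.
\begin{enumerate*}
	\item A negated regular constraint $\neg(x\in\regex)$ becomes $x\in\Sigma^*\setminus\regex$ by FA complementation. Regular constraints are not nodes of the inclusion graph, so chain-freeness is untouched.
	\item A negated length constraint is just a LIA literal.
	\item A negated functional transducer constraint $\neg T(s,t)$ becomes $T(s,z)\wedge z\neq t$ with $z$ fresh. Correctness is immediate from functionality: $(\ass(s),\ass(t))\notin T$ iff $\ass(s)\notin\mathrm{dom}(T)$ or $T(\ass(s))\neq\ass(t)$. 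A subtlety is that the rewriting as stated loses the case $\ass(s)\notin\mathrm{dom}(T)$. I would therefore use the disjunction $s\in\Sigma^*\setminus\mathrm{dom}(T)\ \vee\ (T(s,z)\wedge z\neq t)$, where $\mathrm{dom}(T)$ is regular and $s\in\cdot$ is expressed through a fresh variable equated to $s$ if $s$ is a concatenation. Alternatively, I would totalize $T$ by an extra output marker.
	\item A disequation is rewritten by the $\tocode$ encoding above into a disjunction of a length disequality and a conjunction of equations, regular constraints and LIA/$\tocode$ literals.
\end{enumerate*}
Expanding the disjunctions into DNF yields finitely many positive cubes with LIA side constraints. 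The original cube is satisfiable iff one of them is, since each rewriting is an existential extension.

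The main obstacle is to show that each resulting positive cube is still chain-free. This requires choosing a dual variant of $\cube^+$ that is directly chain-free and then orienting every new equation and transducer constraint so that the extended graph stays acyclic and no variable appears twice in inputs. The key observation is that all newly introduced nodes have at least one side made entirely of fresh variables, each occurring exactly once in the added constraint: $z$ in $T(s,z)$, and $x_1a_1x_2$, $y_1a_2y_2$ in the disequation encoding. Such a node can be oriented so that its fresh side is the input, using duals $T^{-1}(z,s)$ and $x_1a_1x_2 = z_1\cdots z_i$. Fresh variables then occur in inputs exactly once, and no old node has an output sharing variables with the fresh input. Hence the new node is a source in the inclusion graph and creates no cycle; its edges only go toward old constraints.

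For $T(s,z)$ with $z$ also appearing in the disequation $z\neq t$, I would apply the disequation encoding to $z\neq t$ with its own fresh $x_1a_1x_2$ as the input of $x_1a_1x_2 = z$. Now $z$ occurs in outputs only, so $T^{-1}(z,s)$ is no longer a source, and $z$ is fed by the node $x_1a_1x_2=z$. That node is itself a source, so the resulting new subgraph is a tree hanging above the old acyclic graph and remains acyclic.

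With positivity and chain-freeness established, I would conclude by appealing to decidability of chain-free cubes with length and $\tocode$ constraints. The $\tocode$ atoms range over single characters constrained to $\Sigma$, so they are expressible as LIA over one-letter variables, as in \cite{NoodlerSAT24}. Together with \cref{lemma:chain-free-equivalence}, this gives decidability of each of the finitely many cubes, and hence of the original one.
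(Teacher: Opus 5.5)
Your argument is sound, and it differs from the paper mainly in its last step. Both proofs first reduce the cube to finitely many positive directly chain-free cubes using the encodings of \cref{sec:chain-free}. The paper then decides each of them with its own procedure (\cref{thm:correctness}). It keeps $\tocode$ symbolic until the \ruleStable{} check, where a LIA formula for the possible code points, read off the Parikh image, is conjoined with the length image. You instead appeal to the decidability result of \cite{ChainFree} through \cref{lemma:chain-free-equivalence}.

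Your write-up is more careful than the paper's sketch in two places. First, your extra disjunct $s\in\Sigma^*\setminus\mathrm{dom}(T)$ (or the totalization) is genuinely needed. $T(s,z)\wedge z\neq t$ is equivalent to $\neg T(s,t)$ only for total $T$, as with \replaceall, whereas the theorem assumes only functionality; the paper's encoding tacitly assumes totality. Second, you actually justify that chain-freeness is preserved, which the paper merely asserts. Every new node gets an input of fresh variables occurring in no other input, so it is either a source or fed only by such a source.

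What each route buys: yours makes the result independent of the correctness of the new procedure and is fully elementary. The paper's ties decidability to the implemented algorithm and avoids enumerating letters, which matters for Unicode-sized alphabets but not for decidability.

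The one step to tighten is the final citation. There is no published decidability result for chain-free cubes with transducers \emph{and} $\tocode$: \cite{ChainFree} has no $\tocode$, and \cite{NoodlerSAT24} has no transducers. Also, $\tocode(a_i)$ is not expressible in LIA over lengths, since the length image forgets the letters. The repair follows from your own observation that $a_1,a_2$ range over the finite alphabet $\Sigma$. Replace $a_1\in\Sigma\wedge a_2\in\Sigma\wedge\tocode(a_1)\neq\tocode(a_2)$ by $\bigvee_{c,d\in\Sigma,\,c\neq d}(a_1\in\{c\}\wedge a_2\in\{d\})$. This adds only regular constraints, hence no nodes of the inclusion graph. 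After DNF expansion, every cube then lies in the positive fragment of \cite{ChainFree} (equations, transducers, regular and length constraints), so your conclusion goes through verbatim.
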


\section{Combining Stable and \ConcatenationFree Form} \label{sec:stable-monadic-forms}

Conceptually, we decide a positive cube $\cube$ in two steps. First, we derive a length image of $\Strof\cube$ (the part without length constraints), in the form of a LIA formula $\lenimg{\Strof\cube}$.
Second, we solve the LIA formula $\lenimg{\Strof\cube} \land \Lenof\cube$ using a LIA solver.

The \emph{length image} $\lenimg{\psi}$ of a formula $\psi$ is a LIA formula over variables $\{\lenvar x \mid  x\in \varsOf \psi\}$ that characterizes the lengths of the solutions of $\psi$: it has a solution $\nu$ iff $\psi$ has a solution $\mu$ where for each $x\in\varsOf \psi$, $\nu(\lenvar x) = |\mu(x)|$.

The generation of a length image $\lenimg \psi$ of a length-free formula goes back to \cite{AutomataSplitting,AnthonyTowards2016,ChainFree} and requires a \concatenationFree form with acyclic transducer constraints, possibly with trivial (solved) equations that do not influence satisfiability.
The \concatenationFree form is expensive. Its derivation requires generating alignments of variables on different sides, with variable splitting. That may, in theory, add two additional levels of exponential explosion relative to the stabilization of \cite{NoodlerFM23}.
Therefore, we combine the stable and \concatenationFree forms: the \concatenationFree form only for the part relevant to the length constraints and the cheaper stable form for the rest.
%
%
We will first discuss the two pure forms and then define the combination.
\paragraph{Stable form.}
A directly chain-free cube $\cube$ is \emph{stable} when
for each equation $\equationNodeArgs{s}{t}$, $\Regof\cube(s) \supseteq \Regof\cube(t)$, and for each transducer constraint $\transductionArgsOf{\ft{T}}{s}{t}$, $\relationApplyOn{\ft{T}}{\Regof\cube(s)} \supseteq \Regof\cube(t)$.
Stability together with the feasibility of regular constraints implies satisfiability, similar to \cite{NoodlerFM23,NoodlerOOPSLA23}, where transducer constraints are not considered.
%
%
\begin{restatable}{theorem}{lemmaStableHasSolution}\label{lemma:stable-has-solution}
	A stable cube with satisfiable regular constraints is satisfiable.
\end{restatable}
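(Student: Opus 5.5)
Since the inclusion graph of the directly chain-free cube $\cube$ is acyclic, I will build a solution $\ass$ by processing the equations and transducer constraints of $\EqTrof\cube$ in a topological order of $G_\cube$, from sources to sinks. The invariant is that, once a constraint is processed, every variable in its input or output term has a value $\ass(x)\in\Regof\cube(x)$, and every constraint processed so far is satisfied. Before starting, I assign every variable that occurs in no output term an arbitrary word of $\Regof\cube(x)$; such words exist because the regular constraints are satisfiable.

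For the induction step, take a constraint $c$ with input $s=x_1\cdots x_n$ and output $t$. First I argue that every variable of $s$ already has a value. If $x_i$ occurs in the output of some $c'$, then $(c',c)$ is an edge, so $c'$ comes earlier in the topological order and $x_i$ was assigned when $c'$ was processed. Otherwise $x_i$ received its initial value. Hence $\ass(s)\in\Regof\cube(s)$.

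Next I assign the output. If $c$ is an equation, stability gives $\Regof\cube(s)\supseteq\Regof\cube(t)$. If $c$ is a transducer constraint, stability gives $\relationApplyOn{\ft{T}}{\Regof\cube(s)}\supseteq\Regof\cube(t)$. In both cases I want the output variables $y_1\cdots y_m$ of $t$ to take values $w_j\in\Regof\cube(y_j)$ such that $w_1\cdots w_m$ equals $\ass(s)$, respectively is related to $\ass(s)$ by $T$. This requires that no variable of $t$ has been assigned yet.

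That requirement holds for two reasons. A variable of $t$ is not in the input of any earlier constraint, because that would give an edge to $c$ from a predecessor, and since the constraint owning that input comes earlier, the order would force a cycle (the edge would go from $c$ to an earlier node). It is also not in the output of any earlier constraint, which needs a separate argument, given below. Writing the values as distinct is then fine: once the $w_j$ are fixed, the values of later inputs follow automatically, and a variable occurs in only one input, at most once. So distinct occurrences never conflict.

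The main obstacle is the direction of the stability inclusion. It says that every tuple in $\Regof\cube(t)$ is reachable from \emph{some} word of $\Regof\cube(s)$, not necessarily from the word $\ass(s)$ already chosen. So the construction should run in the reverse direction: process sinks first, choosing values of output terms freely within the regular constraints, and then use the inclusion to find a preimage for the inputs. Concretely, in reverse topological order I pick the output values, take $u\in\Regof\cube(s)$ with $u=\ass(t)$ (respectively $(u,\ass(t))\in T$), and split $u$ into pieces $u_i\in\Regof\cube(x_i)$. Such a splitting exists by the definition of $\Regof\cube(s)$ as a concatenation of the individual languages.

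The condition that input variables occur only once, in a single input, makes these input assignments unique and non-conflicting. The remaining case is a variable shared between the outputs of two constraints. I expect this to be the delicate step. It will be resolved by the convention, inherited from \cite{NoodlerFM23}, that in stabilization such a variable's value is fixed when the first (latest-in-order) output containing it is processed. The later inclusions then still apply, because the whole tuple $\ass(t)$ lies in $\Regof\cube(t)$, and the inclusion only requires membership in that set.
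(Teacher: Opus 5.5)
Once you abandon the forward pass, your argument is the paper's. You traverse the acyclic inclusion graph in reverse topological order, give output-only variables arbitrary words of their languages, and pull each output value back through the stability inclusion. The preimage is then split among the input variables, which chain-freeness makes pairwise distinct and absent from every other input.

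Drop the forward construction entirely. In particular, remove the pre-assignment of input-only variables, which would clash with the pulled-back pieces. The shared-output case also needs no convention from \cite{NoodlerFM23}. Every variable is written exactly once: initially if it is output-only, and otherwise as a piece of the unique input containing it. That constraint is processed before every constraint having the variable in its output, and outputs are only ever read.
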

%

There is a major difference between the implications of stability for transducer constraints and for equations.
For only equations, a stronger version of stability implies satisfiability even in the non-chain-free case: the inclusions must also hold for the duals of constraints on the cycle; there are also additional conditions on multiple occurrences of variables in the inputs~\cite{NoodlerFM23}.
This no longer works for transducers, as witnessed by a cube $\ft{T}(x,x) \land x \in \{ a, b \}$ where $R(L(\ft{T})) = \{(a, b), (b, a)\}$.
The cube is stable, has unconstrained and hence non-empty languages of variables, with both inclusions holding true ($T[\Sigma^*] = T^{-1}[\Sigma^*] = \Sigma^*$), but it is unsatisfiable.
Thus, in this paper, we define stability only for the chain-free formulae.


%
%

\paragraph{\ConcatenationFree form.}
We call a positive cube $\cube$ \emph{concatenation-free} iff
\begin{enumerate*}
	\item concatenation appears only in \emph{solved equations} (also called \emph{binding equations}) of the form $\equationNodeArgs{x}{t_x}$ where $x$ has no other occurrence in $\cube$ and $\Regof\cube(x) \supseteq \Regof\cube(t_x)$, and
	\item the transducer constraints are chain-free (due to concatenation-freeness, it just means acyclic).
\end{enumerate*}

The inclusions satisfied by the binding equations mean $\Regof\cube(x)$ does not influence the satisfiability and the values of the variables of $t_x$, and therefore of any other variables of $\cube$, although the value of $t_x$ may still put additional constraints on the value of $x$.

Using the techniques discussed in \cite{AnthonyTowards2016,ChainFree} and in \cref{sec:lengths}, a length image can be extracted from a \concatenationFree cube by composing the transducer constraints into a single multi-tape transducer, extracting its Parikh image, and then equating the length of $x$ with the sum of the length of variables in $t_x$ for every binding equation.


\paragraph{Stable-concatenation-free form.}
We want a combined form $\cube$ where $\EqTrof\cube$ is split into a stable $\stablecOf{\cube}$ and a \concatenationFree $\solvedcOf{\cube}$ part, and the satisfiability depends only on the length image of $\solvedcOf{\cube}\land\solvedcReg$ in conjunction with $\Lenof\cube$, where $\solvedcReg{}$ is $\Regof{\cube}$ restricted to $\varsOf{\solvedc \land \Lenof{\cube}}$ and $\stablecOf{\cube}\land\Regof{\cube}$ is only required to be stable and can be further ignored.
$\stablecOf{\cube}\land\Regof\cube$ and $\solvedcOf{\cube}\land\solvedcReg\land\Lenof\cube$ are interfaced through regular constraints on the shared variables $\sharedv = \varsof{\stablecOf{\cube}}\cap\big(\varsof{\solvedcOf{\cube}}\cup\varsof{\Lenof\cube}\big)$.
Hence, we need $\stablecOf{\cube}$ to not impose any constraints on $\sharedv$ other than $\Regof\cube$: shared variables must remain unconstrained, so that length reasoning depends only on the concatenation-free part.
Thus, $\Regof\cube$ must imply $\stablecOf{\cube}$ projected to $\sharedv$ (the \emph{projection} means quantifying out all variables other than $\sharedv$).
In other words, that every assignment $\ass$ to $\sharedv$ with $\ass(x)\in\Regof\cube(x)$ for all $x\in \sharedv$ is a part of some solution of $\sharedv$.
In stable form, this holds exactly for \emph{output-only} variables---variables not appearing in the input terms.

\begin{restatable}{lemma}{stableProjection}\label{lemma:stable-projection}
	$\Regof\cube$ implies the projection of a stable cube $\cube$ to its output-only variables.
\end{restatable}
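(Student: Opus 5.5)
We prove the statement constructively: given any assignment $\ass_0$ of the output-only variables $O$ of the stable cube $\cube$ with $\ass_0(x)\in\Regof\cube(x)$ for all $x\in O$, we extend it to a solution of $\cube$. Since $\cube$ is directly chain-free, its inclusion graph is acyclic and every variable occurs at most once in all inputs combined. The variables that occur in some input are exactly the non-output-only ones. We also use that stability is preserved under shrinking languages along the graph, the same mechanism as in the proof of \cref{lemma:stable-has-solution}. The plan is to assign values to the input-side variables backwards, from the sinks of the inclusion graph towards its sources, so that every constraint's output determines (via stability) admissible values of its input.

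Concretely, we process the nodes in reverse topological order. We maintain the invariant that every variable already assigned has a value in its language, and that every variable occurring in the output of a processed node has been assigned. Take a node $c$ with input $s$ and output $t$, such that all successors of $c$ are processed. The variables of $t$ are either output-only (fixed by $\ass_0$) or occur in the input of some successor node. Acyclicity and the single-occurrence property imply that such a successor is already processed, so that variable is assigned. Hence $t$ has a value $w\in\Regof\cube(t)$.

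By stability, $w\in\Regof\cube(s)$ if $c$ is an equation, and $w\in\relationApplyOn{\ft{T}}{\Regof\cube(s)}$ if $c$ is a transducer constraint. In either case there exists $u\in\Regof\cube(s)$ with $u=w$ or $(u,w)\in T$. We split $u$ as $u_1\cdots u_n$ with $u_i\in\Regof\cube(x_i)$ for $s=x_1\cdots x_n$, and assign $x_i\mapsto u_i$. This is consistent because the $x_i$ are pairwise distinct and occur in no other input. They cannot be output-only, and since they are inputs of $c$ only, nothing processed earlier has assigned them. Input variables of $c$ that also appear in outputs of predecessors are assigned here and get used later, which preserves the invariant.

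After all nodes are processed, every variable of $\cube$ is assigned within its regular language. Variables in no equation or transducer constraint get any word from their language; this is possible because the languages of $O$ are nonempty by assumption, and the other languages are witnessed by the construction itself. Every constraint is satisfied by construction, and the assignment agrees with $\ass_0$ on $O$, which gives the projection claim. The main obstacle is the consistency argument: showing that, when $c$ is processed, every variable of its output is already fixed and none of its input variables is. This is exactly where the directly chain-free conditions (acyclicity plus unique input occurrence) are needed, and I would state it as a small auxiliary claim about reverse topological orders of the inclusion graph.
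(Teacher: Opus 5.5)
Your proposal is correct and matches the paper's proof, which defers to the argument for \cref{lemma:stable-has-solution}: fix the output-only variables within their languages, then traverse the acyclic inclusion graph in reverse topological order, using stability to obtain preimages and direct chain-freeness to ensure input variables are distinct and assigned only once. You spell out the invariant and consistency checks the paper leaves implicit (acyclicity rules out a variable occurring in both the input and output of the same constraint, and unique input occurrence prevents reassignment); your aside that stability is ``preserved under shrinking languages'' is never used, is not true in general when input languages shrink, and can be dropped.
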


We now say that a cube $\varphi$ is in the \emph{stable-\concatenationFree} form with the split $\EqTrof\cube = (\stablecOf{\cube}\land\solvedcOf{\cube})$ if
$\stablecOf{\cube}\land\Regof\cube$ is stable,
$\solvedcOf{\cube}\land\Regof\cube$ is \concatenationFree, and
the variables $\varsof{\stablecOf{\cube}}\cap\big(\varsof{\solvedcOf{\cube}}\cup\varsof{\Lenof\cube}\big)$ do not appear in the inputs of $\stablecOf{\cube}$.

\begin{restatable}{theorem}{stableConcatenationFree}\label{lemma:stable-solved}
	A stable-\concatenationFree cube $\cube$ with the split $\EqTrof\cube = (\stablecOf{\cube}\land\solvedcOf{\cube})$ is an existential extension of
	$\solvedcOf{\cube} \land \solvedcReg \land \Lenof\cube$.
\end{restatable}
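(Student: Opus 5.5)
The proof has two directions. Write $\psi = \solvedcOf{\cube} \land \solvedcReg \land \Lenof\cube$. Note that $\varsOf{\psi} = \varsof{\solvedcOf{\cube}}\cup\varsof{\Lenof\cube}$, and that $\cube$ consists of $\psi$ together with $\stablecOf{\cube}$ and the regular constraints of $\Regof\cube$ on the variables outside $\varsOf{\psi}$.

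\textbf{The direction from $\cube$ to $\psi$.} Every solution of $\cube$ restricted to $\varsOf{\psi}$ satisfies $\psi$. This holds because every conjunct of $\psi$ is a conjunct of $\cube$, and $\solvedcReg$ is just $\Regof\cube$ restricted to $\varsOf{\psi}$. So existentially quantifying the remaining variables of $\cube$ gives a formula that implies $\psi$.

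\textbf{The converse.} Take any solution $\ass$ of $\psi$; we must extend it to a solution of $\cube$. Let $\sharedv$ be the shared variables $\varsof{\stablecOf{\cube}}\cap\varsOf{\psi}$. By the definition of the stable-\concatenationFree form, these variables do not occur in inputs of $\stablecOf{\cube}$, so they are output-only variables of the stable cube $\stablecOf{\cube}\land\Regof\cube$. Since $\ass \models \solvedcReg$, we have $\ass(x)\in\Regof\cube(x)$ for all $x\in\sharedv$.

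Consider the assignment that agrees with $\ass$ on $\sharedv$ and picks arbitrary values in $\Regof\cube(x)$ for the other output-only variables of $\stablecOf{\cube}$. Applying \cref{lemma:stable-projection} to the stable cube $\stablecOf{\cube}\land\Regof\cube$, this assignment extends to a solution $\ass'$ of $\stablecOf{\cube}\land\Regof\cube$. Two points need checking for this step:
\begin{enumerate*}
\item the lemma should be applied to the output-only variables of $\stablecOf{\cube}$ itself; any subset of them is fine, because the projection onto a subset is implied as well;
\item we need nonemptiness of $\Regof\cube(x)$ for the remaining output-only variables.
\end{enumerate*}
For the second point, if some language is empty then the stable part is unsatisfiable. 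I would rather argue that stability propagates nonemptiness: an empty $\Regof\cube(s)$ forces an empty output language. Alternatively, it suffices to choose the extension provided by the lemma for the full tuple, and to obtain values for the other output-only variables by restricting $\ass$ directly. So I would apply the projection lemma exactly to the set $\sharedv$. This should be harmless, since existentially quantifying further output-only variables preserves the implication.

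\textbf{Gluing and remaining variables.} Define the combined assignment as $\ass$ on $\varsOf{\psi}$ and $\ass'$ on $\varsof{\stablecOf{\cube}}\setminus\varsOf{\psi}$. The two agree on the overlap $\sharedv$, so the combination is well defined. Any remaining variables of $\cube$ that occur only in regular constraints get arbitrary words from their languages. This step needs nonemptiness too, which holds whenever $\cube$ has any solution at all. I expect this gluing, together with the edge case of empty regular languages, to be the main obstacle. I would handle it by noting that if some $\Regof\cube(x)$ with $x\notin\varsOf{\psi}$ is empty, then, as with \cref{lemma:stable-has-solution}, one assumes normalized satisfiable regular constraints, or one includes such trivially-unsat cases in the form's preconditions. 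The integer variables are unchanged, so $\Lenof\cube$ still holds. The combined assignment then satisfies $\solvedcOf{\cube}$, $\stablecOf{\cube}$, all of $\Regof\cube$, and $\Lenof\cube$, which proves that $\cube$ is an existential extension of $\psi$.
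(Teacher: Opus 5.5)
Your proposal is correct and follows the paper's own argument. The shared variables are output-only in $\stablecOf{\cube}$, so \cref{lemma:stable-projection} extends any solution of $\solvedcOf{\cube}\land\solvedcReg\land\Lenof\cube$ to the stable part, and the two assignments glue on $\sharedv$; the converse direction is trivial. Your worry about nonemptiness is well founded: the statement genuinely fails without it (take the stable equation $\equationNodeArgs{x}{zw}$ with $\Regof\cube(w)=\emptyset$ and $z$ the only shared variable), and the paper tacitly relies on the nonempty-languages hypothesis of \cref{lemma:stable-has-solution}. So state that assumption explicitly and drop the ``stability propagates nonemptiness'' detour, since stability only transfers nonemptiness from outputs to inputs and cannot help for a stray output variable like $w$.
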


Transformation to stable-\concatenationFree form, maximizing the stable and minimizing the \concatenationFree part, is the core of our decision procedure in \cref{sec:dec-proc}.
The split is not performed at the level of the input formula, but rather when variables are split during the elimination of concatenation in the decision procedure that introduces new variables. The new variables make it possible to separate the length dependent part more effectively, as illustrated in \cref{example:stabilvsconcatelim} below.
Although not required by \cref{lemma:stable-solved}, our decision procedure in \cref{sec:dec-proc} derives the \concatenationFree part in a stable form.
Refinement of languages during concatenation-elimination helps to prune infeasible case splits that would otherwise continue branching only to be discarded at the very end of the decision procedure, at the generation of the length image.





\begin{example}[Stabilization vs. concatenation elimination]\label{example:stabilvsconcatelim}
	Let $\cube = \ft{T}(uvw, xyz) \land \constraintsRegulars \land \lenvar{z} = 8 + 3k$ be a cube.
	To derive a length image, techniques based on elimination of concatenation split the input $uvw$ and output $xyz$ terms to generate all possible alignments of variables such that a word from $\regLangConcat{uvw}$ can be transduced to a word from $\regLangConcat{xyz}$.
	This produces many cases to consider: e.g.,
	$\transductionArgsOf{T_1}{u_1}{x} \land \transductionArgsOf{T_2}{u_2}{y_1} \land \transductionArgsOf{T_3}{v_1}{y_2} \land \transductionArgsOf{T_4}{v_2}{z_1} \land \transductionArgsOf{T_5}{w}{z_2} \land \equationNodeArgs{u}{u_1u_2} \land \equationNodeArgs{v}{v_1v_2} \land \equationNodeArgs{y_1y_2}{y} \land \equationNodeArgs{z_1z_2}{z}$;
	or $\transductionArgsOf{T_1}{u}{x_1} \land \transductionArgsOf{T_2}{v}{x_2} \land \transductionArgsOf{T_3}{w_1}{x_3} \land \transductionArgsOf{T_4}{w_2}{y} \land \transductionArgsOf{T_5}{w_3}{z} \land \equationNodeArgs{w}{w_1w_2w_3} \land \equationNodeArgs{x_1x_2x_3}{x}$.
	Generally up to exponentially many to the number of repetitions of variables.
	Since the stable part $\stablecOf{\cube}$ in a stable-\concatenationFree form does not influence the length image of $\cube$, we can split $\cube$ into a stable part $\stablecOf{\cube} \colon \ft{T}(uvw, s) \land \equationNodeArgs{s}{s_1s_2s_3} \land \equationNodeArgs{s_1}{x}, \equationNodeArgs{s_2}{y}$ and a \concatenationFree part $\solvedcOf{\cube} \colon \equationNodeArgs{s_3}{z}$, $\sharedv = \{ s_3 \}$ such that
	$\stablecOf{\cube}$ can be stabilized without any splits (stability is enough for satisfiability), and the length image can be extracted solely from $\solvedcOf{\cube} \land \constraintsRegulars$.
	Hence, stabilization avoids the state-space blow-up of variable alignments, and prunes infeasible cases early.
	\sectionEndSymbol
\end{example}

\scaledvspace{-4mm}
\section{Noodlification for Equations and Transducer Constraints} \label{sec:noodlification}
\scaledvspace{-2mm}

The core of our decision procedure is the transformation of the input cube into the stable-\concatenationFree form by a sequence of \emph{noodlification} steps of two kinds, introduced for equations in~\cite{NoodlerFM23} and later extended to equations with length constraints in~\cite{NoodlerOOPSLA23}.
They refine languages of variables to achieve stability; one additionally splits equations and transducer constraints (and their variables) to achieve concatenation-freeness.
Noodlification steers the generation of possible equation/transducer splits (alignments) only to the feasible ones.
A~\emph{split} is a splitting of variables into a concatenation of fresh variables where boundaries between variables on both sides of an equation/transducer constraint match.
A~\emph{feasible split} is such where the fresh variables have non-empty languages.

We first recall the noodlification algorithms for the equations of~\cite{NoodlerFM23,NoodlerOOPSLA23}, and then describe both noodlification algorithms for transducers.
The algorithms for transducers are similar to those for equations.
The extension from equations to transducers is nevertheless nontrivial, especially in the case of noodlification that eliminates concatenation.

To fit our decision procedure in~\cref{sec:dec-proc},
all noodlification variants use a~unified interface transforming a directly chain-free cube $\formula\colon \constraintsRegulars \land \eta$
where $\eta$ is an equation or a~transducer constraint to a disjunction of directly chain-free cubes $\psi \colon \constraintsRegulars_\psi \land \aligneqin_\psi \land \aligneqout_\psi \land \constraintsTransducers_\psi$ where
\begin{inparaenum}[(i)]
	\item $\constraintsRegulars_\psi$ are (refined) regular constraints in a form closer to stability than $\constraintsRegulars$,
	\item $\aligneqin_\psi$ and $\aligneqout_\psi$ are \emph{input} and \emph{output binding equations} of the form $x \equals \sterm$ and $\sterm \equals x$, respectively, and
	\item $\constraintsTransducers_\psi$ are (refined) transducer constraints.
\end{inparaenum}
Note that the binding equations are not fully concatenation-free, since $x$ may occur in multiple equations, and the output binding equations admit $x$ on the right.
A variable split is performed together with splitting the variable's languages in~$\constraintsRegulars$.
If $\Regof{\psi}(x) \supseteq \Regof{\psi}(\sterm)$ for each input binding equation $x \equals \sterm$ and output binding equation $\sterm \equals x$ and $\Trof{\psi}$ is concatenation-free, we call $\psi$ \emph{almost \concatenationFree}.
The (input/output) binding equations in almost \concatenationFree cube are in the decision procedure of \cref{sec:dec-proc}, further processed, and turned into a fully \concatenationFree form.

\scaledvspace{-2mm}
\subsection{Noodlification for Equations}\label{sec:equationnoodlification}
\scaledvspace{-2mm}
Although equations could be in theory treated as a case of transducer constraints (the relation being identity),
the noodlification specialized to equations can be implemented more efficiently.
We therefore use the procedures specialized for equations from \cite{NoodlerOOPSLA23} whenever possible, also as a part of the workflow for processing transducer constraints.

\paragraph{Stabilizing noodlification.}
We invoke the procedure $\noodlifyrefine$ from \cite{NoodlerOOPSLA23} by calling $\eqtostable(\constraintsRegulars \land \equationNodeArgs{s}{t})$.
It returns a disjunction of cubes of the form $\psi \colon \constraintsRegulars_\psi \land \top \land \top \land \top$.
In conjunction with $\equationNodeArgs{s}{t}$, the disjunction is equivalent to the original cube, and for each disjunct, $\equationNodeArgs{s}{t}\land \constraintsRegulars_\psi$ is stable.
Stabilization treats an equation $\equationNodeArgs{s}{t}$ as a pair of inclusions $s \subseteq t$ and $t \subseteq s$ and refines only the variables on the left side of the processed inclusion.
We use NFAs for the languages of variables (from initial regular constraints) and perform only regularity-preserving operations.
For $s = x_1\cdots x_n$ and $t = y_1\cdots y_m$, we construct an $\epsilon$-preserving product NFA for the delimited language $w_1\markerL\cdots\markerL w_n$ with each $w_i \in \lass(x_i)$ and $w_1\cdots w_n \in \lass(t)$, treating $\markerL$ as $\epsilon$ but preserving it in the product.
This yields a tight set of refinements: every solution of $s=t$ under $\lass$ is captured by at least one refinement of $\noodlifyrefine$.
The product NFA is split into \emph{noodles}: sub-automata that contain exactly $n-1$ delimiter transitions (one delimiter transition per each \emph{level} of delimiter transitions in the product given by the borders between two variable languages in $s$).
Their union preserves the product language.
Each noodle is cut at its delimiter transitions into segments, one per occurrence of a variable in $s$.
For each variable $x_i$, the segments of its occurrences are intersected to produce the refined language of $x_i$.
Noodles that make some variable empty are discarded.

\paragraph{Concatenation-eliminating noodlification.}
We invoke the procedure $\noodlifysplit$ from \cite{NoodlerOOPSLA23} by calling $\eqtomonadic(\constraintsRegulars \land \equationNodeArgs{s}{t})$.
It returns a disjunction of almost \concatenationFree cubes of the form $\psi \colon \constraintsRegulars_\psi \land \aligneqin_\psi \land \aligneqout_\psi \land \top$ that is an existential extension of the input cube ($\aligneqin$ and $\aligneqout$ are in \cite{NoodlerOOPSLA23} unified in a set of alignment equations).
We make the alignment explicit for concatenation-eliminating noodlification by marking \emph{both} sides with distinct delimiters $\markerL$ and $\markerR$ and constructing the corresponding product automaton that preserves both kinds of delimiters.
Each noodle then represents a concrete interleaving of the left/right borders and induces a split into fresh variables together with alignment equations, producing a disjunction of almost \concatenationFree cubes that is an existential extension of the input.

\begin{example}{Concatenation-eliminating noodlification for equations.}\label{example:noodlification-equation}
	For a chain-free cube $\formula\colon \constraintsRegulars \land xyu = wz$ with $\constraintsRegulars = \{ x \mapsto a^*,\; p \mapsto (a+b)^* \mid p \in \{ y, u ,w ,z \} \}$, one feasible split $\psi$ generated by the concatenation-eliminating noodlification is $\psi \colon \constraintsRegulars_\psi \land \aligneqin_\psi \land \aligneqout_\psi \land \top$ where
	\begin{inparaenum}[(i)]
		\item $\constraintsRegulars_\psi = \constraintsRegulars \cup \{ v_1 \mapsto a^*,\; v_2 \mapsto a^*,\; v_3 \mapsto (a+b)^*,\; v_4 \mapsto (a+b)^* \}$,
		\item $\aligneqin_\psi = \{ x = v_1 v_2,\; y = v_3,\; u = v_4 \} $ and $\aligneqout_\psi = \{ v_1 = w ,\; v_2v_3v_4 = z \} $.
	\end{inparaenum}
	\sectionEndSymbol
\end{example}

\scaledvspace{-2mm}
\subsection{Concatenation-eliminating Noodlification for Transducers}
\label{sec:noodlemonadic}
\scaledvspace{-2mm}
Concatenation-eliminating noodlification transforms a cube $\formula = \constraintsRegulars \land \transductionArgs{s}{t}$, with $s = x_0 \cdots x_n$ and $t = y_0\cdots y_m$ of a transducer constraint and regular constraints into an almost concatenation-free and stable form.

\scaledvspace{-1mm}
\paragraph{Synchronization of transducer and regular constraints.}
In the first step, we construct a version of $\transduction$ with its input and output restricted by $\regLangConcat{s}$ and $\regLangConcat{t}$, respectively, and with delimited borders between variables in $s$ and $t$ (as in the product NFA for noodlification for equations).
Specifically, we construct the transducer
$
	\noodlificationNftLang = \syncOnTapesMarked
	{
		\big(
		\syncOnTapesMarked
		{\regLangConcatPresMark{s}{\markerL}}
		{\transduction}
		{1}{1}{\markers}
		\big)}
	{\regLangConcatPresMark{t}{\markerR}}
	{2}{1}
	{\markers}
$
where $\markers = \{\markerL, \markerR\}$ is the set containing \emph{input} ( $\markerL$ ) and \emph{output delimiter} ( $\markerR$ ), and for a delimiter $\marker \in \markers$, $\regLangConcatPresMark{z_1\concat z_2 \concats z_k}{\marker} = \langAssignOf{z_1} \concat \marker \concat \langAssignOf{z_2} \concat \marker \concats \marker \concat \regLangConcat{z_k}$.
The synchronization of $\markers$-delimited transducers guarantees that the delimiters never mix with other symbols in a transition.
The language of $\noodlificationNftLang$ thus consists of words
$
	w_{i_0,j_0} \concat \relationIdPair{\marker_1} \concat w_{i_1,j_1} \concat \relationIdPair{\marker_2} \concat \ldots \concat \relationIdPair{\marker_{n+m}} \concat w_{i_{n+m},j_{n+m}}
$
with each $\marker_k \in \markers$ marking the boundaries between the variables of $s$ (with input delimiter $\markerL$) and of $t$ (with output delimiter $\markerR$).
Intuitively, $L(T)$ would have a word $w \in (\Sigma_\epsilon^2)^*$ encoding a pair $(u,v)$
with $u = u_{x_1}\cdots u_{x_n} \in \langAssign(s)$ and $v = v_{y_1}\cdots {v_{y_m}} \in \langAssign(t)$.
The split of $w$ into the concatenation of segments $w_{i_0,j_0}\concat\ldots\concat w_{i_{n+m},j_{n+m}}$ with each $w_{i_k,j_k}$ encoding a pair of words $(\bar{u}_{i_k},\bar{v}_{j_k})$ encodes an \emph{alignment of variables} of $s$ and $t$ generated by methods such as~\cite{AnthonyTowards2016,ChainFree}.
The alignment may be seen as a choice of how to overlap variables of the two sides $s$ and $t$.
I.e., $\bar{u}_{i_0}\cdots \bar{u}_{i_{n+m}} = u$, $\bar{v}_{j_0}\cdots \bar{v}_{j_{n+m}} = v$,
each word segment $\bar{u}_{i_k}$ is a substring of $u_{x_{\ell}}$ for some input variable $x_\ell$, and
$\bar{v}_{j_k}$ a substring of $v_{y_\ell}$ for some output variable $y_\ell$.
Our automata representation of the alignments is efficient: it allows sharing and early pruning of alignments infeasible due to the incompatibility of regular constraints on variables.

%
For example, for $n=3$ and $m=2$,
one of the words with highlighted parts forming input and output variables could be
%
$$
	\mathrlap{\underbrace{\phantom{
				w_{0,0} \relationIdPair{\markerL} w_{1,0} \relationIdPair{\markerL} w_{2,0}
			}}_{y_0}}
	\mathrlap{\overbrace{\phantom{w_{0,0}}}^{x_0}}
	w_{0,0}
	\relationIdPair{\markerL}
	\mathrlap{\overbrace{\phantom{w_{1,0}}}^{x_1}}
	w_{1,0}
	\relationIdPair{\markerL}
	\mathrlap{\overbrace{\phantom{w_{2,0} \relationIdPair{\markerR} w_{2,1}}}^{x_2}}
	w_{2,0}
	\relationIdPair{\markerR}
	\mathrlap{\underbrace{\phantom{w_{2,1} \relationIdPair{\markerL} w_{3,1}}}_{y_1}}
	w_{2,1}
	\relationIdPair{\markerL}
	\mathrlap{\overbrace{\phantom{w_{3,1} \relationIdPair{\markerR} w_{3,2}}}^{x_3}}
	w_{3,1}
	\relationIdPair{\markerR}
	\mathrlap{\underbrace{\phantom{w_{3,2}}}_{y_2}}
	w_{3,2}\text{.}
$$
%
%
%
Importantly, $\noodlificationNftLang$ preserves all solutions of the original constraint: every solution of the constraint $\varphi$ corresponds to a run of $\noodlificationNftLang$ where the values of the individual variables are between the occurrences of delimiters ($\markerL$ on the input, $\markerR$ on the output).
The opposite (a run of $\noodlificationNftLang$ corresponds to a solution of the constraint $\varphi$) does not hold because the output of $T$ might contain multiple occurrences of the same variable (in a single run of $\noodlificationNftLang$ the words corresponding to the same variable are not synchronized).
The synchronization of different occurrences of the same variables is handled by the decision procedure later.


\begin{figure}[t]
	\centering
	\begin{tikzpicture}[>=stealth',shorten >=0pt,auto,node distance=13mm,transform shape,scale=0.8]
  \tikzset{every state/.style={minimum size=width("$q_{10}$")+2pt,inner sep=1pt,}}

  \node[state, initial, initial text={}] (q0) {$q_0$};
  \node[state, right of=q0] (q1) {$q_1$};
  \node[state, right of=q1] (q2) {$q_2$};
  \node[state, right of=q2,yshift=4mm] (q3) {$q_3$};
  \node[state, right of=q3] (q4) {$q_4$};
  \node[state, right of=q4] (q5) {$q_5$};
  \node[state, right of=q5] (q6) {$q_6$};
  \node[state, right of=q6] (q7) {$q_7$};
  \node[state, accepting, right of=q7,yshift=-4mm] (q8) {$q_8$};
  \node[state, accepting, right of=q8] (q9) {$q_9$};

  \draw[->] (q0) edge node{$a/a$} (q1);
  \draw[->] (q1) edge[bend left] node{$t/t$} (q2);
  \draw[->] (q2) edge[bend left] node{$a/a$} (q1);
  \draw[->] (q2) edge[bend left=10,pos=0.7] node{$\markerL/\markerL$} (q3);
  \draw[->] (q3) edge node{${<}/\&$} (q4);
  \draw[->] (q4) edge node{$\varepsilon/l$} (q5);
  \draw[->] (q5) edge node{$\varepsilon/t$} (q6);
  \draw[->] (q6) edge node{$\markerR/\markerR$} (q7);
  \draw[->] (q7) edge[bend left=10,pos=0.2] node{$\varepsilon/;$} (q8);
  \draw[->] (q8) edge[loop below] node[right]{$;/;$} (q8);
  \draw[->] (q8) edge node{$\&/\&$} (q9);
  \draw[->] (q9) edge[loop below] node[right]{$\&/\&$} (q9);

  \draw[decorate, decoration={brace, amplitude=8pt}] 
    ([yshift=3mm]q0.north) -- ([yshift=3mm]q2.north) 
    node[midway, above=9pt]{$T_{0,0}$};

  \draw[decorate, decoration={brace, amplitude=8pt}] 
    ([yshift=3mm]q3.north) -- ([yshift=3mm]q6.north) 
    node[midway, above=9pt]{$T_{1,0}$};

  \draw[decorate, decoration={brace, amplitude=8pt}] 
    ([yshift=3mm]q7.north) -- ([yshift=7mm]q9.north) 
    node[midway, above=9pt]{$T_{1,1}$};

  \begin{scope}[dashed]

  \node[state, right of=q2, yshift=-4mm] (q10) {$q_{10}$};
  \node[state, right of=q10] (q11) {$q_{11}$};
  \node[state, right of=q11] (q12) {$q_{12}$};
  \node[state, right of=q12] (q13) {$q_{13}$};
  \draw[->] (q2) edge[bend right=10] node[below,pos=0.2]{${<}/\&$} (q10);
  \draw[->] (q10) edge node[below]{$\varepsilon/l$} (q11);
  \draw[->] (q11) edge node[below]{$\varepsilon/t$} (q12);
  \draw[->] (q12) edge node[below]{$\markerR/\markerR$} (q13);
  \draw[->] (q12) edge node[left,pos=0.7]{$\markerL/\markerL$} (q6);
  \draw[->] (q13) edge node[right,pos=0.3]{$\markerL/\markerL$} (q7);
  \end{scope}
\end{tikzpicture}
	\vspace{-0.2cm}
	\caption{Compositional transducer $\noodlificationNftLang$ constructed during concatenation-eliminating noodlification for constraint in Example~\ref{ex:stupidex}. The non-dashed part is one of its noodles.}
	\label{fig:running-example/compositional-nft}
	\vspace{-4mm}
\end{figure}

\scaledvspace{-1mm}
\paragraph{Case-split by noodles.}
In the second step, $\noodlificationNftLang$ is split into a finite set $\mathit{Noodles}$ of transducers $\noodle$ (called \emph{noodle}) of the form
$
	\ftT_{i_0,j_0} \concat \relationIdPair{\marker_1} \concat \ftT_{i_1,j_1} \concat \relationIdPair{\marker_2} \concat \ldots \concat \relationIdPair{\marker_{n+m}} \concat \ftT_{i_{n+m},j_{n+m}}
$
where each $T_{i_k,j_k}$ is a transducer over $\SigmaEps$ and $\lang(\noodlificationNftLang) = \bigcup_{\noodle \in \mathit{Noodles}} \lang(\noodle)$.
Intuitively, each $\noodle$ is a long, thin transducer formed as a sequence of component \emph{segment transducers} $\ft{T}_{i_k,j_k}$ connected by delimiters, resembling strands or ``noodles''.
For each \emph{level} of delimiters, we use only one of the transitions in $\noodlificationNftLang$.
%
The indices $i_k$ and $j_k$ count the number of preceding input $(\markerL,\markerL)$ and output $(\markerR,\markerR)$ pairs.

Finally, we create a conjunction $\psi_\noodle = \constraintsRegulars \land \aligneqin \land \aligneqout \land \constraintsTransducers$ for each noodle $\noodle$ where
\begin{enumerate*}
	\item $\constraintsTransducers_\noodle$ are \emph{segment (transducer) constraints} $T_{i_k,j_k}(z_{i_k,j_k},\bar{z}_{i_k,j_k})$,
	\item $\constraintsRegulars_\noodle$ are regular constraints $\constraintsRegulars$ extended with \emph{new regular constraints} of the form $z_{i_k,j_k} \in \projTo{1}{\ftT_{i_k,j_k}}$ and $\bar{z}_{i_k,j_k} \in \projTo{2}{\ftT_{i_k,j_k}}$ where $z_{i_k,j_k},\bar{z}_{i_k,j_k}$ are fresh variables,
	\item $\aligneqin_\noodle$ are \emph{input binding equations} of the form $x_\ell \approx z_{\ell,g_1} \cdots z_{\ell,g_h}$ where $g_1 \cdots g_h$ covers the interval of the segment transducers between two $(\markerR,\markerR)$ for the corresponding index $\ell$, and
	\item $\aligneqout_\noodle$ are \emph{output binding equations} of the form $\bar{z}_{g_1,\ell} \cdots \bar{z}_{g_h,\ell} \approx y_\ell$ where $g_1 \cdots g_h$ covers the interval of the segment transducers between two $(\markerL,\markerL)$ for the corresponding index $\ell$.
\end{enumerate*}

%
Thus, the concatenation-eliminating noodlification returns for the cube $\formula = \transductionArgs{s}{t} \land \constraintsRegulars$ the disjunction of almost \concatenationFree cubes $\noodlifyMonadicFt(\formula) = \bigvee_{\noodle \in \mathit{Noodles}} \psi_\noodle$.

%
%
%

\scaledvspace{-1mm}
\begin{restatable}{lemma}{lemmaMonadicExistentialExtension}\label{lemma:monadic-existential-extension}
	$\noodlifyMonadicFt(\formula)$ is a disjunction of almost \concatenationFree cubes, and it is an existential extension of $\formula$.
\end{restatable}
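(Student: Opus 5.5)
We prove the two claims of the lemma separately: first that each disjunct $\psi_\noodle$ is almost \concatenationFree, then that the disjunction is an existential extension of $\formula$. Throughout we write $\varphi' \equiv \exists\bar z.\,\bigvee_\noodle \psi_\noodle$ for the existential closure over the fresh variables.

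\emph{Almost \concatenationFree form.} We check three things for a fixed noodle $\noodle$.
\begin{enumerate}
	\item Every segment constraint $T_{i_k,j_k}(z_{i_k,j_k},\bar z_{i_k,j_k})$ has single-variable input and output.
	\item These variables are pairwise distinct and fresh. Hence the inclusion graph of $\Trof{\psi_\noodle}$ has no edges, and $\Trof{\psi_\noodle}$ is concatenation-free and acyclic.
	\item The binding equations satisfy the required inclusions. For an input binding equation $x_\ell \approx z_{\ell,g_1}\cdots z_{\ell,g_h}$, each segment's input projection is a substring-piece of the $\ell$-th block of $\regLangConcatPresMark{s}{\markerL}$. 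More precisely, the concatenation $\projTo{1}{\ftT_{\ell,g_1}}\cdots\projTo{1}{\ftT_{\ell,g_h}}$, with the intermediate $\markerR$ delimiters erased on the input tape, is contained in the language read between the $\ell$-th and $(\ell{+}1)$-st $\markerL$ along the noodle. By the synchronization with $\regLangConcatPresMark{s}{\markerL}$, that language is contained in $\Regof{}(x_\ell)$. The argument for output binding equations is symmetric, using $\regLangConcatPresMark{t}{\markerR}$ on tape 2.
\end{enumerate}
To make the third point rigorous, we first state a small invariant of $\markers$-delimited synchronization. Every run of $\noodlificationNftLang$ projects on tape 1, after deleting $\markerR$, to a run of the automaton for $\regLangConcatPresMark{s}{\markerL}$. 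It also projects on tape 2, after deleting $\markerL$, to a run of the automaton for $\regLangConcatPresMark{t}{\markerR}$. This invariant follows by induction on runs of the product, from cases 1--5 of the transition definition.

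\emph{Soundness direction ($\varphi'\Rightarrow\formula$).} Take a model of some $\psi_\noodle$. We define $x_\ell$ and $y_\ell$ by the binding equations. The regular constraints on the original variables are kept in $\constraintsRegulars$. For the transducer constraint, pick for each $k$ a pair $(\mu(z_{i_k,j_k}),\mu(\bar z_{i_k,j_k}))\in T_{i_k,j_k}$ and an accepting word of $T_{i_k,j_k}$ for it. Concatenating these words with the delimiter pairs gives a word of $\noodle\subseteq\noodlificationNftLang$. Erasing the delimiters yields a word of $L(\transduction)$, because the synchronization with the delimited regular languages only restricts runs of $\transduction$ and inserts delimiter steps. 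That word encodes $(\mu(s),\mu(t))$, so $\transductionArgs{s}{t}$ holds. One subtlety must be checked here. The segment transducers are taken as sub-automata between fixed delimiter transitions, each with a single initial and a single final state: the target of the incoming delimiter and the source of the outgoing delimiter. This is what makes independently chosen segment runs glue together. We will note that this holds by the construction of noodles: one delimiter transition is chosen per level.

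\emph{Completeness direction ($\formula\Rightarrow\varphi'$).} Given a solution $\mu$ of $\formula$, take an accepting run of $\transduction$ on $(\mu(s),\mu(t))$. We insert $\markerL$ steps at the input positions where one $x_\ell$ ends and $\markerR$ steps at the output positions where one $y_\ell$ ends. Between two consecutive $\transduction$-transitions, several delimiters may occur at the same point; we order them arbitrarily, for example inputs first. Combined with accepting runs of the regular automata for each variable value, this gives an accepting run of $\noodlificationNftLang$, which lies in some noodle $\noodle$. Cutting the run at its delimiters defines values for the fresh variables $z,\bar z$ that satisfy the segment constraints, the new regular constraints, and the binding equations.

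The main obstacle is the gluing argument in the soundness direction. We must ensure that the noodle decomposition is \emph{exact}, i.e.\ $L(\noodle)=L(\ftT_{i_0,j_0})\concat\relationIdPair{\marker_1}\concat\cdots$. An over-approximating decomposition would break soundness, since independently chosen segment runs might fail to connect. We will address this by fixing, for each noodle, a single delimiter transition per level, so that segments meet at fixed states and the product of segment languages equals the noodle language.
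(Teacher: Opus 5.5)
Your proposal is correct and follows essentially the same route as the paper's proof: solutions are preserved by the delimited product transducer, its language is exactly the union of the noodles, and the segment constraints and binding equations are read off each noodle, with almost concatenation-freeness holding by construction. The difference is that you make explicit two points the paper takes as given by construction: the synchronization invariant that justifies the binding-equation inclusions, and the exactness/gluing of the noodle decomposition (one delimiter transition per level) that the soundness direction needs.
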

\scaledvspace{-2mm}

\begin{wrapfigure}[12]{r}{0.25\textwidth}
	\vspace{-1.2cm}
	\centering
	$$
		\begin{array}{c|c|c}
			\texttt{[alt]}^+\texttt{<}^*                          & \multicolumn{2}{c}{\texttt{<}^*\texttt{[;\&]}^*}                             \\
			x                                                     & \multicolumn{2}{c}{y}                                                        \\
			\hline
			\texttt{(at)}^+                                       & \texttt{<}                                       & \texttt{;}^*\texttt{\&}^* \\
			z_{0,0}                                               & z_{1,0}                                          & z_{1,1}                   \\
			T_{0,0}                                               & T_{1,0}                                          & T_{1,1}                   \\
			\bar{z}_{0,0}                                         & \bar{z}_{1,0}                                    & \bar{z}_{1,1}             \\
			\texttt{(at)}^+                                       & \texttt{\&lt}                                    & \texttt{;}^+\texttt{\&}^* \\
			\hline
			\multicolumn{2}{c|}{v}                                & z                                                                            \\
			\multicolumn{2}{c|}{\texttt{(at)}^*\texttt{(\&lt)}^+} & \texttt{;}^*\texttt{\&}^*                                                    \\
		\end{array}
	$$
	\vspace{-0.4cm}
	\caption{Elements of $\psi$.}
	\label{fig:running-example/possible-alignment}
\end{wrapfigure}
\medskip\noindent\refstepcounter{example}\textit{Example~\theexample}\label{ex:stupidex}.
Let $\Sigma = \{ a, <, \&, l, t, ; \}$ be an alphabet, $\formula = \constraintsRegulars \land \relConstraintArgs{T}{xy}{vz}$ a cube where $\ft{T}$ is the transducer from~\cref{example:modelling-replace-operations}, replacing all occurrences of $<$ with $\texttt{\&lt;}$.
We use the standard regular expressions
to define regular languages.
$\constraintsRegulars$ is given as $\langAssignOf{x} = \texttt{[alt]}^+\texttt{<}^*$, $\langAssignOf{y} = \texttt{<}^*\texttt{[;\&]}^*$, $\langAssignOf{v} = \texttt{(at)}^*\texttt{(\&lt)}^+$, and $\langAssignOf{z} = \texttt{;}^*\texttt{\&}^*$.
%
\cref{fig:running-example/compositional-nft} shows a~compositional transducer $\noodlificationNftLang$ with the non-dashed part being 
one of its noodles $\noodle$ with its segment transducers $T_{i_k, j_k}$ (the other noodles follow the lower paths). 
\cref{fig:running-example/possible-alignment} schematically represents elements of $\psi_\noodle$.
The vertical lines show the boundaries between the variables in $s$ and $t$ ($\markerL$ for vertical lines that split the input row,  $\markerR$ for the output row).
The inner rows show the fresh variables $z_{i_k,j_k}$ and $\bar{z}_{i_k,j_k}$, their correspondence to the segment transducer constraints $T_{i_k,j_k}(z_{i_k,j_k},\bar{z}_{i_k,j_k})$, and their refined languages.
The outermost rows show the original variables with their languages.
$\noodle$ leads to the following elements of $\psi_\noodle$:
\begin{inparaenum}[(i)]
	\item $\constraintsRegulars_\noodle \colon \constraintsRegulars \land
		z_{0,0} \in \texttt{(at)}^+ \land z_{1,0} \in \texttt{<} \land z_{1,1} \in \texttt{;}^*\texttt{\&}^* \land
		\bar{z}_{0,0} \in \texttt{(at)}^+ \land \bar{z}_{1,0} \in \texttt{\&lt} \land \bar{z}_{1,1} \in \texttt{;}^+\texttt{\&}^*
	$,
	\item $\aligneqin_\noodle \colon \equationNodeArgs{x}{z_{0,0}} \land \equationNodeArgs{y}{z_{1,0}z_{1,1}}$,
	\item
	$\aligneqout_\noodle \colon \equationNodeArgs{\bar{z}_{0,0}\bar{z}_{1,0}}{v} \land \equationNodeArgs{\bar{z}_{1,1}}{z}$, and
	\item $\constraintsTransducers_\noodle \colon \relConstraintArgs{\ft{T_{0,0}}}{z_{0,0}}{\bar{z}_{0,0}} \land \relConstraintArgs{\ft{T_{1,0}}}{z_{1,0}}{\bar{z}_{1,0}} \land \relConstraintArgs{\ft{T_{1,1}}}{z_{1,1}}{\bar{z}_{1,1}}$.
\end{inparaenum}
\sectionEndSymbol
\scaledvspace{-2mm}
\subsection{Stabilizing Noodlification for Transducers}
\label{sec:noodlestable}
\scaledvspace{-2mm}
In this section, we describe how to transform a conjunction $\formula = \constraintsRegulars \land \transductionArgs{s}{t}$ of a transducer constraint with normalized regular constraints into the stable form, without trying to eliminate concatenation.
We can simplify our task of achieving the stability of $\transductionArgs{s}{t}$ by rewriting it as an equivalent constraint $\transductionArgs{s}{z} \land \equationNodeArgs{z}{t}$ where $z$ is a fresh string variable.
The transducer constraint $\transductionArgs{s}{z}$ is simpler, as it contains only a single variable in the output, and the equation $\equationNodeArgs{z}{t}$ can be stabilized by means of~\cite{NoodlerFM23}.
The $\transductionArgs{s}{z}$ is stabilized simply by setting the language of $z$ to $\relationApplyOn{\transduction}{\regLangConcat{s}}$.

The stable noodlification returns for the cube $\formula$ the formula $\noodlifyStableFt(\formula) = \constraintsRegulars_s \land \aligneqin_s \land \top \land \constraintsTransducers_s$ where
$\constraintsTransducers_s = \transductionArgs{s}{z}$,
$\aligneqin_s = \equationNodeArgs{z}{t}$,
and
$\constraintsRegulars_s = (\constraintsRegulars \land z\in\relationApplyOn{\transduction}{\regLangConcat{s}})$.

Note that formula $\noodlifyStableFt(\formula)$ is not necessary in stable form, as
condition $\regLangConcat{z} \supseteq \regLangConcat{t}$ does not need to hold for $\equationNodeArgs{z}{t}$.
In the decision procedure of \cref{sec:dec-proc}, the equation will eventually stabilize via noodlification for equations.

\scaledvspace{-2mm}
\paragraph{Stabilizing vs. concatenation-eliminating noodlification.}
Transformation of $T(s,t)$ to $T(t,z) \land z\equals t$ followed by the stabilizing noodlification of $z\equals t$ only propagates refinements to the languages of the variables of $t$.
Instead of generating noodles from $T_+$ as in concatenation-eliminating noodlification in \cref{sec:noodlemonadic},
noodles are generated from a simpler product without explicit marked borders of the input variables: $\syncOnTapesMarked{\relationApplyOn{\transduction}{\regLangConcat{s}}}{\regLangConcatPresMark{t}{\markerR}}{1}{1}{\{\markerR\}}$.
This produces far fewer noodles than the
\concatenationFree noodlification (we avoid generating interleavings of input and output variables)
and does not require introducing new variables.
The effect compared to the concatenation-eliminating noodlification is up to exponentially fewer noodles in a single noodlification step and exponentially fewer new variables across multiple noodlification steps.

\scaledvspace{-1mm}
\section{Decision Procedure}\label{sec:dec-proc}
\scaledvspace{-2mm}

\begin{wrapfigure}[17]{r}{0.55\textwidth}
	\vspace{-10mm}
	\begin{algorithm}[H]
		\footnotesize
		\caption{Decision procedure}
		\label{alg:satisfiability}

		\KwIn{A directly chain-free cube $\formula$}
		\KwOut{$\sat$ if $\formula$ is satisfiable; \newline $\unsat$ if $\formula$ is unsatisfiable}

		$\rules := [\ruleLSubs{}$, $\ruleLLSubs{}$, $\ruleSkip{}$, $\refineStable{}$, $\ruleCombineNonLength{}$, $\ruleAlignSplitLen{}]$\;
		$W := \{(\EqTrof\cube$, $\texttt{toposort}(\EqTrof\cube)$, \hspace*{0.83cm}$\Regof\cube$, $\emptyset$, $\varsOf{\Lenof\cube})\}$\;
		\While{$W$ is not empty}{
			$V := W.\mathit{pop()}$\; 
			\If{$\ruleStable$ is applicable for $V$}{
				\Return $\sat$;
			}
			\For{$\mathit{rule} \in \rules$}{
				\If{$\mathit{rule}$ is applicable for $V$}{
					$W = W \cup \mathit{rule}(V)$\;
					\Break\;
				}
			}
		}
		\Return $\unsat$\;
	\end{algorithm}
\end{wrapfigure}

In this section, we describe the decision procedure for directly chain-free cubes.
The decision procedure first transforms the input cube into a stable-\concatenationFree form and then constructs and checks the satisfiability of the length image of the \concatenationFree part.
The division of the formula into the \concatenationFree and the stable part is derived on the fly, driven by accumulating the set of \emph{length-aware} variables---with values-dependent length const\-raints.
%
Constraints with length-aware variables in the input end up in the concatenation-free part of the final stable-\concatenationFree formula.

The procedure builds a proof tree by repeatedly applying the \emph{inference rules} presented later in \cref{sec:refrules,sec:substrules}.
The vertices of the proof tree are tuples $(\incls$, $\frontier$, $\lass$, $\solvedeqs$, $\lenvars)$ where
$\incls$ is a set of equations and transducer constraints of the inclusion graph of the formula, $\frontier \subseteq \incls$ is an inclusion graph exploration frontier of constraints about to be processed, with elements always maintained ordered in the topological order, $\lass$ is a language assignment, $\solvedeqs$ is a set of solved equations, and $\lenvars$ is a set of length-aware variables.
A vertex represents the cube $\incls \land \lass \land \solvedeqs$. 
%

We write the inference rules as
$\infer[\ruleName{Name}]{\ikset{(\incls_i, \frontier_i, \lass_i, \solvedeqs_i, \lenvars_i)}}{(\incls, \frontier, \lass, \solvedeqs, \lenvars)}{~\varphi_{\ruleName{Name}}}$
where $\ruleName{Name}$ is the rule's \emph{name},
the vertex above the line is the \emph{premise},
$\varphi_{\ruleName{Name}}$~is the~\emph{side condition} on the premise necessary for the rule to be applicable,
and the set of vertices under the line is the~\emph{set of conclusions}.
Each application consumes a premise and produces a set of conclusions, each conclusion spawning a new branch of the proof tree.
A set of conclusions encodes a disjunction as an existential extension of the premise vertex cube.

Each application of an inference rule processes a constraint on top of the frontier~$\frontier$, potentially refining the language assignment and splitting complex transducer constraints and equations that require decomposition (have length-aware variables in the input), while accumulating length-aware variables.
A constraint is processed only after all its predecessors in the inclusion graph were processed, hence the whole algorithm is based on a graph exploration.
All inference rules maintain the direct chain-freeness invariant of the cubes.
Constraints requiring decomposition are sometimes reinserted to the graph exploration frontier, pushing the frontier back.
In the following, we write $\ell\lconcat\frontier$ to denote the frontier where $\ell$ is the first element and $\frontier$ is the rest, and we write $M\lconcat\frontier$ to denote the frontier with the elements of a set $M$ put on the top (in topological order).
%

The pseudocode of the decision procedure is shown in \cref{alg:satisfiability}.
%
The list \rules{} has the inference rules (see \cref{sec:refrules,sec:substrules}) ordered by their priority.
The worklist $W$ contains yet to be processed proof tree vertices.
It is initialized with the vertex $(\EqTrof\cube$, $\texttt{toposort}(\EqTrof\cube)$, $\Regof\cube$, $\emptyset$, $\varsOf{\Lenof\cube})$ where $\texttt{toposort}(\EqTrof\cube)$ is the set of vertices of inclusion graph for $\varphi$ ordered in the topological order, $\Regof\cube$ is the initial regular constraint and the set of length-aware variables is initialized as $\varsOf{\Lenof\cube}$.
%
The procedure iteratively processes vertices $V$ from $W$.
If the rule $\ruleStable$ (see \cref{sec:lengths}) is applicable we have stable-\concatenationFree form and the length image is satisfiable, therefore according to \cref{lemma:stable-solved} we return $\sat$.
Otherwise, we proceed by selecting the first applicable inference rule for $V$, and adding the set of conclusions $\mathit{rule}(V)$ of the rule applied on $V$ to the worklist $W$.
If no proof tree vertex is left to process ($W$ is empty), it means that the original cube does not have a solution, and we return $\unsat$.

For the sake of compactness,
we will further write the input and output cubes of the noodlification procedures from \cref{sec:noodlification} as tuples of the sets of constraints: pairs $(\constraintsRegulars,\eta)$ for inputs and quadruples $(\constraintsRegulars,\aligneqin,\aligneqout,\constraintsTransducers)$ for outputs (or sets of quadruples for disjunctions of cubes). The $\emptyset$ here stands for the conjunct $\top$.

\scaledvspace{-1mm}
\subsection{Refinement Rules}\label{sec:refrules}
\scaledvspace{-1mm}
%
We start with refinement rules. These rules transform an equality or transducer constraint at the top of $\frontier$ into constraints closer to stable or concatenation-free form, ultimately producing a stable-concatenation-free form.
Choosing between targeting the stable or the concatenation-free form is steered by the presence of length-aware variables: if the input of the constraint contains no length-aware variables, we refine toward the more efficient stable form; otherwise, we target the concatenation-free form.

The first refinement rule is $\refineStable{}$.
It transforms an equality/transducer constraint into constraints that are closer to stable form.
The transformation uses stabilizing noodlification (see \cref{sec:equationnoodlification,sec:noodlestable}) and may produce multiple successor
vertices, one for each tuple returned by the noodlification.
Recall that for an equation constraint, noodlification returns a set of tuples containing only refinements of $\lass$ setting up stable form of the equation.
For a transducer constraint, noodlification returns a single
tuple containing refined $\lass$, a transducer constraint and the equation, which should be returned to the frontier for further refinement. Although the refined transducer constraint itself is already in stable form, we keep it in $\incls$ for potential further refinement caused by substitutions (see \cref{sec:substrules}). Formally, the rule $\refineStable{}$ is given as:
%
%
{
\small
\[
	\refineStable{}:
	\setlength{\arraycolsep}{4.5pt}
	\begin{array}{rcccccl}
		(        & \shrinkit\incls \uplus \{\eta(\sterm, \tterm)\},                        & \eta(\sterm, \tterm)\lconcat \frontier, & \lass,   & \solvedeqs, & \lenvars & \shrinkit)
		\\
		\hline
		\ikbigl( & \shrinkit \incls \cup \aligneqin_i \cup \simpleTransducerConstraints_i, & \aligneqin_i\lconcat \frontier,         & \lass_i, & \solvedeqs, & \lenvars & \shrinkit)\ikbigr
	\end{array}\
	\varphi_{\refineStable{}}
\]
\vspace*{-5mm}
\begin{align*}
	\textstyle
	\text{where }
	 & \varphi_{\refineStable{}} \defiff
	\
	\varsOf{\sterm} \cap \lenvars = \emptyset\text{ and } \noodlifyStable(\eta(\sterm, \tterm) ,\lass) = \big\{(\lass_i, \aligneqin_i,\emptyset, \simpleTransducerConstraints_i)\big\}_{i=1}^k\text{.}
\end{align*}
}%
Here $\noodlifyStable(\eta(\sterm, \tterm), \lass)$ calls $\eqtostable$ if $\eta$
is an equation (in that case both $\aligneqin_i$ and $\simpleTransducerConstraints_i$ are empty), or it calls $\noodlifyStableFt$ if $\eta$ is a transducer (in that case $k=1$).

\begin{example}[Applying $\refineStable{}$ to~\cref{ex:stupidex}]\label{ex:refinestable}
	Consider the cube $\formula$ from~\cref{ex:stupidex} with $\lenvars = \emptyset$ (no length constraints).
	The initial proof tree vertex is $V_0 = (\{\relConstraintArgs{T}{xy}{vz}\},\; [\relConstraintArgs{T}{xy}{vz}],\; \lass_0,\; \emptyset,\; \emptyset)$,
	where $\lass_0$ maps $x \mapsto \texttt{[alt]}^+\texttt{<}^*$, $y \mapsto \texttt{<}^*\texttt{[;\&]}^*$, $v \mapsto \texttt{(at)}^*\texttt{(\&lt)}^+$, $z \mapsto \texttt{;}^*\texttt{\&}^*$.
	Since $\varsOf{xy} \cap \lenvars = \emptyset$, the rule $\refineStable{}$ applies to $\relConstraintArgs{T}{xy}{vz}$.
	It calls $\noodlifyStableFt(\relConstraintArgs{T}{xy}{vz},\; \lass_0)$, introducing a fresh variable $w$ and returning the single quadruple $(\lass_1,\; \{\equationNodeArgs{w}{vz}\},\; \emptyset,\; \{\relConstraintArgs{T}{xy}{w}\})$.
	The refined language assignment sets $\lass_1(w) = \relationApplyOn{T}{\regLangConcat{xy}}$.
	Since $\regLangConcat{xy} = \texttt{[alt]}^+\texttt{<}^*\texttt{[;\&]}^*$ and $T$ replaces every~$\texttt{<}$ by~$\texttt{\&lt;}$, $\lass_1(w) = \texttt{[alt]}^+\texttt{(\&lt;)}^*\texttt{[;\&]}^*$.
	The rule yields a single conclusion vertex
	$V_1 = (\{\relConstraintArgs{T}{xy}{w}, \\\equationNodeArgs{w}{vz}\}, [\equationNodeArgs{w}{vz}], \lass_1, \emptyset, \emptyset)$.
	\sectionEndSymbol
\end{example}

As a special case, to avoid refining constraints that are already stable, we use the rule $\ruleSkip{}$, allowing to skip stable constraints with no length variables in the input:
{
\small
\[\ruleSkip{}:
	\begin{array}{rcccccl}
		( & \incls, & \eta(\sterm, \tterm) \lconcat \frontier, & \lass, & \solvedeqs, & \lenvars & )
		\\
		\hline
		( & \incls, & \frontier,                               & \lass, & \solvedeqs, & \lenvars & )
	\end{array}\
	\varsOf{\sterm}\cap\lenvars=\emptyset \land
	(\eta(\sterm, \tterm) \land \lass) \text{ is stable}\text{.}
\]
}%

If the constraint on top of $\frontier$ contains a length variable in the input, we use the refinement rule $\ruleAlignSplitLen{}$ to employ the concatenation-eliminating noodlification (returning
a set of tuples containing refined regular constraints, input and output
binding equations, and in the case of transducer constraints, the concatenation-free segment transducer constraints).
As the segment transducer constraints are already concatenation-free, we do not
need to put them into $\frontier$, only into $\incls$ (as for the rule $\refineStable{}$).
We put in $\frontier$ only input and output binding equations since they might not be in the form of solved equations.
Since the noodlification may introduce fresh variables, we need to update the set
of length-aware variables so that information about which new variables affect the length constraint is propagated correctly.
Formally, the rule $\ruleAlignSplitLen{}$ is defined as:
%
{
\small
\[
	\ruleAlignSplitLen{}:
	\setlength{\arraycolsep}{2.5pt}
	\begin{array}{rcccccl}
		(        & \shrinkit\incls \uplus \{\eta(\sterm, \tterm)\},                         & \eta(\sterm, \tterm)\lconcat \frontier, & \lass,   & \solvedeqs, & \lenvars                 & \shrinkit)
		\\
		\hline
		\ikbigl( & \shrinkit \incls \cup \mathcal{E}_i \cup \simpleTransducerConstraints_i, & \mathcal{E}_i\lconcat \frontier,        & \lass_i, & \solvedeqs, & \lenvars \cup \lenvars_i & \shrinkit)\ikbigr
	\end{array}\
	\varphi_{\ruleAlignSplitLen{}}
\]
\vspace*{-4mm}
\begin{align*}
	\textstyle
	\text{where }
	 & \varphi_{\ruleAlignSplitLen{}} \defiff \varsOf{\sterm} \cap \lenvars \neq \emptyset \text{, } \noodlifyMonadic(\eta(\sterm, \tterm) ,\lass) = \big\{(\lass_i,\overbrace{\aligneqin_i, \aligneqout_i}^{\mathcal{E}_i = \aligneqin_i \cup \aligneqout_i},\simpleTransducerConstraints_i)\big\}_{i=1}^k\text{,} \\
	 & \hspace*{0cm} {}
	\text{and for } i \in \{1..k\}: \lenvars_i = \proplenfw(\simpleTransducerConstraints_i, \proplenfw(\aligneqin_i, \lenvars)) \cup \proplenbw(\aligneqout_i, \lenvars)\text{.}
\end{align*}
}%
Here, $\noodlifyMonadic(\eta(\sterm, \tterm) ,\lass)$ calls $\eqtomonadic$ if $\eta$
is an equation (in that case $\simpleTransducerConstraints_i$ is empty), or
$\noodlifyMonadicFt$ if $\eta$ is a transducer.
For a particular quadruple returned by the concatenation-eliminating noodlification, the set of length-aware variables is updated as follows.
The variables connected by $\aligneqin_i$ with the length-aware variables in the input of $\eta$ become length-aware.
Length awareness is propagated similarly from input to output via concatenation-free transducers $\simpleTransducerConstraints_i$.
We say these two are \emph{forward length propagations}, formally $\proplenfw(\col, \lenvars') = \lenvars' \cup \{ x \in \varsOf{t'} \mid \tau(s',t')\in\col, \varsOf{s'} \cap \lenvars' \neq \emptyset \}$.
The length-awareness of the variables in the output of $\eta$ is also propagated to the output variables of segment transducers through the output equalities $\aligneqout_i$.
This propagation is called \emph{backward length propagation} formally defined as $\proplenbw(\col, \lenvars') = \lenvars' \cup \{ x \in \varsOf{s'} \mid \tau(s',t')\in\col, \varsOf{t'} \cap \lenvars' \neq \emptyset \}$.

To reduce the number of unnecessary variable alignments and thus the number of noodles, the concatenation-free eliminating noodlification assumes that
at least one variable is length-aware of each two adjacent variables in the constraint's input.
Therefore, we use the rule $\ruleCombineNonLength{}$ to replace sequences of non-length-aware variables in the input with fresh variables, and generate new equations relating the fresh variables with replaced sequences.
As the original variables have no other occurrence in the constraint (by chain-freeness, since they appear on the input), we can exclude them from further processing, provided we assign the language of the concatenation of the original languages to the fresh variables.
Formally, the rule $\ruleCombineNonLength{}$ is given as:
%
{
\small
\[
	\ruleCombineNonLength{}:
	\setlength{\arraycolsep}{3pt}
	\begin{array}{rcccccl}
		( & \shrinkit\incls \uplus \{ \eta(\sterm, \tterm) \},                  & \eta(\sterm, \tterm)\lconcat \frontier,  & \lass,             & \solvedeqs, & \lenvars & \shrinkit)
		\\
		\hline
		( & \shrinkit \incls \cup \{ \eta(\sterm', \tterm) \} \cup \mathcal{E}, & \eta(\sterm', \tterm)\lconcat \frontier, & \lass' \cup \lass, & \solvedeqs, & \lenvars & \shrinkit)
	\end{array}\
	\varphi_{\ruleCombineNonLength{}}
\]
\vspace*{-5mm}
\begin{align*}
	\textstyle
	\text{where }
	\varphi_{\ruleCombineNonLength{}} & \defiff
	(\sterm', \mathcal{E}) = \uniform(\sterm, \lenvars) \text{, } \lass' = \{ x \mapsto \lass(u) \mid \equationNodeArgs{u}{x} \in \mathcal{E}  \}, \mathcal{E} \neq \emptyset \text{.}
\end{align*}
}%
Here, we define $\uniform(\sterm, \lenvars)$ as a pair $(\sterm', \mathcal{E})$ such that $\sterm'$ is formed from $\sterm$ where
each maximal contiguous block of non-length variables $x_1\cdots x_n$ where $n \geq 2$ is replaced by a fresh variable $x$
and the equation $\equationNodeArgs{x_1\cdots x_n}{x}$ is added to $\mathcal{E}$.

\scaledvspace{-1mm}
\subsection{Substitution Rules}\label{sec:substrules}
\scaledvspace{-1mm}
%
%
The substitution rules $\ruleLSubs{}$ and $\ruleLLSubs{}$ transform an output/input binding equation $\sterm \equals x$ or $x \equals \sterm$, respectively, into a proper solved equation (moving it from the almost concatenation-free to the fully concatenation-free part).
The transformation is constructed by applying the \emph{substitution} $\subst = \{x \mapsto \sterm\}$ induced by the equation on all constraints.
Application of the substitution $\subst$ on a constraint $\eta$ (equation or transducer), denoted $\subst(\eta)$, returns the constraint where each occurrence of $x$ is replaced with the term~$\sterm$.
We extend application of $\subst$ to sets in the natural way.

Applying a substitution may have two effects:
\begin{inparaenum}[(i)]
	\item some constraints from $\incls$ might start violating the stable-form condition, and
	\item some concatenation-free transducer constraints from $\incls$ might become non-concatenation-free.
\end{inparaenum}
Those constraints are returned to the frontier to be processed again.

The rule $\ruleLSubs{}$ transforms output binding equation $\sterm \equals x$ as follows:
{
\small
\[
	\ruleLSubs{}:
	\begin{array}{rcccccl}
		( & \incls \uplus \{ \equationNodeArgs{\sterm}{x} \},           & \equationNodeArgs{\sterm}{x} \lconcat \frontier, & \lass, & \solvedeqs, & \lenvars & )
		\\
		\hline
		( & \subst(\incls),                                             & H\lconcat \subst(\frontier),
		  & \lass,
		  & \subst(\solvedeqs) \cup \{ \equationNodeArgs{x}{\sterm} \}, & \lenvars
		  & )
	\end{array}\
	\varphi_{\ruleLSubs{}}
\]
\vspace*{-5mm}
\begin{align*}
	\textstyle
	\text{where }
	\varphi_{\ruleLSubs{}} & \defiff \subst = \{x \mapsto \sterm\} \text{, } (x\in\lenvars \limpl \varsOf{\sterm} \subseteq \lenvars) {, } \lass(\sterm) \subseteq \lass(x)\text{, and}           \\
	                       & H = \subst(\{\eta(\tterm, \tterm') \in\incls \mid \tterm \text{ contains } x\} \cup \{T(y,x) \mid y \in \lenvars, |\sterm| > 1\})\setminus \subst(\frontier)\text{.}
\end{align*}
}%
%
Intuitively, the rule takes an equation of the form $\equationNodeArgs{\sterm}{x}$ and applies the substitution $\subst$ on $\incls$, $\frontier$, and $\solvedeqs$.
To be applicable, $\sterm \equals x$ must be almost concatenation-free ($\lass(\sterm) \subseteq \lass(x)$) and furthermore, if $x$ is length-aware, all variables in $s$ must be length-aware (we cannot replace length-aware variable by non-length-aware ones).
In $H$, we have all the constraints that need to be put in frontier.
Because the language of $x$ was effectively refined (as $x$ was replaced with $\sterm$ whose language is smaller), we need to put all constraints $\eta$ with $x$ on input to frontier, as their stability might have been violated.
Furthermore, the transducer constraints $T(y,x)$ which were already concatenation-free (and need to be concatenation-free, i.e, the input $y$ is length-aware), need to be also put back in frontier.
Note that $x$ could be replaced by only one variable ($|\sterm| = 1$), in that case, the transducer constraint will stay concatenation-free, and we do not need to put it in frontier.
Finally, we add (the substituted) nodes to the frontier only if they are not in the frontier already.

\begin{example}[Applying $\ruleLSubs{}$]\label{ex:lsubs}
	Let $w$, $z$, $u$ be string variables,
	$\lass(w) = \texttt{;}^+\texttt{\&}^*$,
	$\lass(z) = \texttt{;}^*\texttt{\&}^*$,
	and $\relConstraintArgs{S}{z}{u}$ a
	transducer constraint with $z$ on its input.
	Consider the vertex
	$
		V = \bigl(
		\{\equationNodeArgs{w}{z},\; \relConstraintArgs{S}{z}{u}\},\;
		[\equationNodeArgs{w}{z}],\;
		\lass,\;
		\emptyset,\;
		\emptyset
		\bigr).
	$
	The equation $\equationNodeArgs{w}{z}$ is in output binding form,
	so $\ruleLSubs{}$ applies with $\subst = \{z \mapsto w\}$.
	The side condition $\varphi_{\ruleLSubs{}}$ holds:
	$\lass(w) = \texttt{;}^+\texttt{\&}^* \subseteq \texttt{;}^*\texttt{\&}^* = \lass(z)$
	and $z \notin \lenvars = \emptyset$.
	Since $\relConstraintArgs{S}{z}{u}$ has $z$ on its input, it enters $H$:
	$
		H = \subst\bigl(\{\relConstraintArgs{S}{z}{u}\}\bigr) \setminus \subst([\,])
		= \{\relConstraintArgs{S}{w}{u}\}.
	$
	The language of $z$ is effectively refined from $\texttt{;}^*\texttt{\&}^*$ to $\lass(w) = \texttt{;}^+\texttt{\&}^*$,
	which may violate stability of $\relConstraintArgs{S}{z}{u}$, so $\relConstraintArgs{S}{w}{u}$ is
	returned to the frontier.
	The conclusion vertex is:
	$
		V' = \bigl(
		\{\relConstraintArgs{S}{w}{u}\},\;
		[\relConstraintArgs{S}{w}{u}],\;
		\lass,\;
		\{\equationNodeArgs{z}{w}\},\;
		\emptyset
		\bigr).
	$
	\sectionEndSymbol
\end{example}

The rule $\ruleLLSubs$ for input binding equations is simpler:
{
\small
\[
	\ruleLLSubs:
	\begin{array}{rcccccl}
		( & \incls \uplus \{ \equationNodeArgs{x}{\sterm} \},           & \equationNodeArgs{x}{\sterm} \lconcat \frontier, & \lass, & \solvedeqs, & \lenvars & )
		\\
		\hline
		( & \subst(\incls),                                             & H\lconcat \subst(\frontier),
		  & \lass,
		  & \subst(\solvedeqs) \cup \{ \equationNodeArgs{x}{\sterm} \}, & \lenvars
		  & )
	\end{array}\
	\varphi_{\ruleLLSubs}
\]
\vspace*{-5mm}
\begin{align*}
	\textstyle
	\text{where }
	\varphi_{\ruleLLSubs} & \defiff \subst = \{x \mapsto \sterm\} \text{, } \varsOf{x \equals \sterm} \subseteq \lenvars {, } \lass(\sterm) \subseteq \lass(x)\text{, and} \\
	                      & H = \subst(\{T(y,x) \mid y \in \lenvars, |\sterm| > 1\})\setminus \subst(\frontier)\text{.}
\end{align*}
}%
We apply the rule only if $x$ is length-aware variable (thus so are all variables of $\sterm$), since non-length-aware $x$ is better handled by $\ruleSkip{}$.
Furthermore, as we have directly chain-free formula, $x \equals \sterm$ can be the only constraint with $x$ in input, therefore, we only need to put transducer constraints that stop being concatenation-free to frontier.



\scaledvspace{-1mm}
\subsection{Generation of the Length Image}
\scaledvspace{-1mm}
\label{sec:lengths}

Finally, the rule $\ruleStable{}$ is applied if the frontier is empty.
Following \cref{lemma:stable-solved}, we infer the length image of the concatenation-free part
$(\varphi_m(\incls, \lenvars) \land \lass \land \solvedeqs)$ where $\varphi_m(\incls, \lenvars)$ contains all (concatenation-free) transducer constraints of $\incls$ containing a length-aware variable from $\lenvars$ on input.
If the length image with the original length constraint $\constraintsLengths$ is satisfiable, the whole formula is satisfiable and we return $\sat$.
	{
		\small
		\[
			\ruleStable{}:
			\begin{array}{ccccccc}
				( & \incls, & \emptyset, & \lass, & \solvedeqs, & \lenvars & )
				\\
				\hline
				\multicolumn{7}{c}{\sat}
			\end{array}\
			\lenimg{\varphi_m(\incls, \lenvars) \land \lass \land \solvedeqs} \land \constraintsLengths \text{ is satisfiable}\text{.}
		\]
	}

In order to implement \ruleStable{}, we need to generate the length image $\lenimg\cube$ of a cube
$\cube$.
%
The technique is standard and closely follows \cite{ChainFree}.
The first part is the formula
$\lenimg{\Eqof\cube} = \bigwedge_{\equationNodeArgs{x}{x_1\cdots x_m} \in \Eqof\cube} {\lenvar x}={\lenvar{x_1} + \cdots + \lenvar{x_m}}$
that encodes the binding equations.
This is conjoined encoded with the length image $\lenimg{\Trof\cube \cap \Regof\cube}$,
obtained by constructing a multi-tape transducer constraint equivalent to $\Trof\cube \cap \Regof\cube$ and computing its Parikh image~\cite{Parikh66}.
To construct the multi-tape transducer, assume that  $S_1(\cdot, \cdot), \dots, S_k(\cdot, \cdot)$ are the constraints of $\Trof\cube$ in the topological order wrt to their inclusion graph.
%
The resulting multi-tape transducer $T_k(x_1,\ldots,x_n)$ has one tape for each variable in $\varsOf{\Trof\cube}$ and it is computed inductively, with $T_1 = S_1$ and $T_{\ell + 1}(x_1,\ldots,x_k, x_{k+1})$ computed from $T_{\ell}(x_1,\ldots,x_k)$ and $S_{\ell+1}(x_{k+1},x_i)$ where $1\leq i \leq k$ as
$
	T_{\ell + 1} = \syncOnTapes{T_{\ell} }{S_{\ell+1}}{i}{2}.
$
%
%
%
Due to the chain-freeness, we can be sure that $x_{k+1}$ is not on the input side of any $S_j(\cdot, \cdot),j\neq k$.
%
The newly obtained multi-tape transducer constraint
$T_{\ell + 1}(x_1,\ldots,x_{k+1})$ is then equivalent to $S_{\ell+1}(x_{k+1},x_i) \land T_{\ell}(x_1,\ldots,x_k)$.
After the resulting multi-tape transducer $T_k$ is synchronized with the regular constraints (by a composition of $T_k$ on $i$-the tape with $\Regof\cube(x_i)$),
the final length image is derived from its Parikh image (where each tape corresponds to a single variable).

To keep the multi-tape transducer and its Parikh image compact, we use the following observation: since the Parikh image is only needed to generate the length image, we can abstract away from the differences between symbols of $\Sigma$, and replace every symbol from $\Sigma$ on the transitions of $T_k$ with the same dummy symbol.
The length image will not change.
Then, we apply the simulation-based
reduction making the transducer often significantly smaller and hence the formula becomes much smaller and faster
to solve.


\begin{wrapfigure}[6]{r}{0.25\textwidth}
	\centering
	\vspace{-10mm}
	\begin{tikzpicture}[>=stealth',shorten >=0pt,auto,node distance=22mm,scale=0.8,
  every state/.style={minimum size=5mm,inner sep=1pt}, transform shape]
  \node[state, accepting, initial, initial text={}] (q0) {$q_0$};
  \node[state, right of=q0] (q1) {$q_1$};

  \draw[->] (q0) edge node[above]{\small $\#/\#/\#$} (q1);
  \draw[->] (q1) edge[bend left=40] node[below]{\small $\varepsilon/\varepsilon/\#$} (q0);
  \draw[->] (q1) edge[bend right=40] node[above]{\small $\varepsilon/\#/\varepsilon$} (q0);
\end{tikzpicture}
	\caption{$T(x,y,z)$}
\end{wrapfigure}

\beginexample\label{ex:lenimg}
Consider the 3-tape transducer $T(x,y,z)$ over the single dummy symbol $\#$
shown right.
Each transition label $a_1/a_2/a_3$ indicates what is read on tapes $x$, $y$, $z$,
respectively.
Let $n_1, n_2, n_3 \geq 0$ count how many times the lower ($\varepsilon/\varepsilon/\#$), middle ($\#/\#/\#$), and upper
($\varepsilon/\#/\varepsilon$) transition is taken.
Then, the Parikh image of $T$ is given as
$
	\exists n_1, n_2, n_3 \geq 0 \colon
	\lenvar{x} = n_2  \land
	\lenvar{y} = n_2 + n_3\land
	\lenvar{z} = n_1 + n_2 \land n_1 + n_3 = n_2.
$
\sectionEndSymbol

%

\begin{restatable}{theorem}{thmCorrectness}\label{thm:correctness}
	The presented decision procedure terminates on a directly chain-free cube $\formula$, and returns $\sat$ iff $\formula$ is satisfiable.
\end{restatable}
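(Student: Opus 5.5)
The proof splits into three parts: soundness of $\sat$, completeness (an $\unsat$ answer is correct), and termination. All three rest on a single invariant, maintained for every vertex $(\incls,\frontier,\lass,\solvedeqs,\lenvars)$ in the proof tree.

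The invariant has the following components:
\begin{enumerate*}
\item the cube $\incls\land\lass\land\solvedeqs\land\Lenof\cube$ is directly chain-free;
\item each solved equation $\equationNodeArgs{x}{t_x}\in\solvedeqs$ is binding and satisfies $\lass(t_x)\subseteq\lass(x)$;
\item every constraint of $\incls\setminus\frontier$ whose input has no length-aware variable is stable w.r.t.\ $\lass$;
\item every constraint of $\incls\setminus\frontier$ with a length-aware input variable is a concatenation-free transducer constraint;
\item every variable sharing a constraint with a length-aware variable in the way prescribed by $\proplenfw/\proplenbw$ is itself in $\lenvars$.
\end{enumerate*}

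I will check rule by rule that the disjunction of the conclusion cubes is an existential extension of the premise cube. For $\refineStable{}$ and $\ruleAlignSplitLen{}$ this follows from the properties of $\eqtostable$ and $\eqtomonadic$ recalled from \cite{NoodlerOOPSLA23}, together with \cref{lemma:monadic-existential-extension} and the equivalence of $T(s,t)$ with $T(s,z)\land\equationNodeArgs{z}{t}$. For $\ruleCombineNonLength{}$ it holds because the combined input variables occur nowhere else, by direct chain-freeness. For $\ruleLSubs{}$ and $\ruleLLSubs{}$ it holds because a substitution by a binding equation preserves solutions. I will also check that the rules preserve the invariant; the side condition $\lass(\sterm)\subseteq\lass(x)$ together with the set $H$ pushed back onto the frontier is exactly what restores items 3 and 4.

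Soundness of $\sat$: when $\ruleStable$ fires, the frontier is empty. By the invariant, the constraints without length-aware inputs together with $\lass$ form a stable part, and the remaining constraints $\varphi_m(\incls,\lenvars)\land\solvedeqs$ form a concatenation-free part. Item 5 ensures that the shared variables do not occur in inputs of the stable part, so the vertex is in stable-concatenation-free form. \Cref{lemma:stable-solved} then makes the vertex cube an existential extension of $\varphi_m\land\lass\land\solvedeqs\land\Lenof\cube$. Satisfiability of the length image lifts to a solution of this part, and hence to a solution of the vertex cube. Following the existential-extension chain back up the tree gives a solution of $\formula$.

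Completeness: suppose $\formula$ has a solution $\ass$. Since each rule application yields a disjunction that existentially extends its premise, there is a branch along which some extension of $\ass$ remains a solution. I argue that this branch ends in a leaf where $\ruleStable$ is applicable. By termination the branch is finite, so it suffices to show that some rule always applies while the frontier is nonempty, and that at an empty frontier the length image is satisfied by the lengths under the extended $\ass$. The first point is a case analysis on the top of the frontier. A constraint with a length-aware input is handled by $\ruleCombineNonLength{}$ or $\ruleAlignSplitLen{}$. Otherwise it is handled by $\ruleSkip{}$ or $\refineStable{}$, and binding equations are handled by the substitution rules. For a binding equation where neither substitution applies (for instance, $x$ is not length-aware in $\ruleLLSubs$), I must argue that $\ruleSkip{}$ or $\refineStable{}$ applies instead. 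This requires checking how the side conditions interact with the priority order.

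Termination is the main obstacle, because constraints are reinserted into the frontier by $H$ and by $\refineStable{}$. My approach is a lexicographic ranking along a branch. Acyclicity of the inclusion graph, which is preserved by every rule, gives a topological layering. A constraint is reinserted only when its input language strictly shrank or its input acquired concatenation, and either event is caused by processing a predecessor constraint. Languages are refined only through a finitely generated family of automata; I bound this family as the closure of the initial automata, the transducers, and their segment sub-automata under product and splitting. Under this bound each constraint can be re-stabilized only finitely often. Splitting by $\ruleAlignSplitLen{}$ happens at most once per layer, since its output constraints are concatenation-free and are put into $\incls$ but not into the frontier. Substitutions strictly decrease the number of unsolved binding equations.

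To make the layering argument precise, I would order constraints by their depth in the acyclic graph and use the multiset of (depth, remaining-refinement-potential) pairs of frontier elements as the measure. A processed constraint only pushes constraints of greater depth, or substituted copies at the same depth whose inputs are strictly refined. König's lemma, using finite branching, then yields termination of the whole procedure.
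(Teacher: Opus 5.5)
Your soundness/completeness part follows the paper's route: each rule yields a disjunction of existential extensions, and at an empty frontier \cref{lemma:stable-solved} reduces the branch to its length image. Your termination argument, however, rests on a claim about the rules that is false, and your completeness direction depends on it. You assume three things. First, that a constraint is reinserted only when its \emph{input} shrank or acquired concatenation because a \emph{predecessor} was processed. Second, that a processed constraint only pushes constraints of greater depth. Third, that \ruleAlignSplitLen{} splits at most once per layer, because its segment constraints go to $\incls$ but not to $\frontier$.

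The set $H$ of \ruleLLSubs{}, and the second component of $H$ in \ruleLSubs{}, do the opposite. When a binding equation with $|\sterm|>1$ is substituted, an already processed concatenation-free constraint $T(y,x)$ with $y\in\lenvars$ has its \emph{output} replaced by $\sterm$ and is put back on the frontier. For \ruleLLSubs{}, this $T(y,x)$ is a predecessor of $\equationNodeArgs{x}{\sterm}$ in the inclusion graph, so the frontier moves backwards. Since $y$ is length-aware, \ruleAlignSplitLen{} fires on $T(y,\sterm)$ again and creates fresh segment constraints and fresh binding equations. So pushes do not go only to greater depth, your (depth, refinement-potential) measure is not shown to decrease, and the number of unsolved binding equations is not monotone either.

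Separately, the ``finitely generated family of automata'' bound is only asserted. Strictly decreasing chains of regular languages need not stabilize, so you would have to prove that only finitely many languages can arise. The paper does not need such a bound: forward re-stabilization is already controlled by processing in topological order.

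The missing idea is the one the paper's termination proof is built on: every constraint reinserted backwards has a single variable $y$ on its input. Re-splitting it produces only one input binding equation $\equationNodeArgs{y}{y_1\cdots y_r}$. Because $y$ occurs in no other input (direct chain-freeness), substituting it can only alter outputs of predecessors of that constraint. Those predecessors are already processed and again have a single input variable. Hence each backward wave of reinsertions walks the acyclic inclusion graph strictly backwards and is finite. Combined with the single forward topological pass, this gives termination. Your ranking argument would have to be rebuilt around this backward traversal rather than around increasing depth.
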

\begin{proof}[Sketch]
	Soundness and completeness are preserved by every rule application: the refinement rules split the current cube into a finite disjunction of existential extensions, while the substitution rules apply an equivalence-preserving substitution and only reinsert constraints whose stability or concatenation-freeness may have been affected. Thus every branch of the worklist search represents a satisfiability-equivalent refinement of the input cube. When the frontier becomes empty, the remaining cube is in the stable-concatenation-free form, so by \cref{lemma:stable-solved} the satisfiability of the current branch is exactly the satisfiability of its length image, which is what the rule \ruleStable{} checks.

	For termination, the crucial invariant is acyclicity of the inclusion graph. The procedure processes constraints in topological order, so in the absence of substitutions it makes only one forward pass through the graph. The only rules that can move work back in the graph are \ruleLSubs{} and \ruleLLSubs{} by reinserting the transducer constraint that became non-concatenation-free after the substitution. However, these reinserted constraints contain only a single variable on the input. Substitutions generated from refining constraints of this form may affect only outputs of constraints that are predecessors in the inclusion graph. Since these predecessors are already processed, they have only a single input variable, and so their processing affects only their predecessors, etc. Therefore reinsertion can only traverse the acyclic graph backwards, and every backward traversal is finite. Combining this with the finite forward pass yields termination of every branch of the search tree, and thus termination of the whole procedure.
\end{proof}


\subsection{Heuristics and Optimizations}\label{sec:heuristics}
\scaledvspace{-0.1cm}

We propose three heuristics and optimizations that appear to work well in our benchmark.

\scaledvspace{-2mm}
\paragraph{Reduction of transducer constraints.}
Consider a case of a transducer constraint where either its input or output consists of \emph{literals} only (variables whose language is a singleton). Then, we can compute the image of languages of the literals and reduce the transducer constraint to a simpler equational constraint, as formalized by the rule:
{
\small
\[
	\rulereducetrans{}:
	\begin{array}{rcccccl}
		( & \incls \uplus \{ T(\sterm, \tterm) \},          & T(\sterm, \tterm)\lconcat \frontier,            & \lass, & \solvedeqs, & \lenvars & )
		\\
		\hline
		( & \incls \cup \{ \equationNodeArgs{f}{\tterm} \}, & \equationNodeArgs{f}{\tterm}\lconcat \frontier,
		  & \lass \cup \{ f \mapsto T[\lass(\sterm)] \},
		  & \solvedeqs,                                     & \lenvars
		  & )
	\end{array}\
	\varphi_{\rulereducetrans{}}
\]
\vspace*{-6mm}
\begin{align*}
	\textstyle
	\text{where }
	\varphi_{\rulereducetrans{}} \defiff &
	|\lass(\sterm)| = 1 \wedge f \text{ is a fresh variable}\text{.}
\end{align*}
}%
In a similar fashion, we define a rule reducing $T(\sterm, \tterm)$
where $|\lass(\tterm)| = 1$ (in that case the language of $f$ is
given as pre-image of $\lass(\tterm)$).

\scaledvspace{-2mm}
\paragraph{Homomorphic transducer constraints.}
For the case that a transducer constraint $T$ is a \emph{homomorphism}, i.e., for each word $w_1$, $w_2$, and $z$
we have $T(w_1w_2, z) \iff \exists z_1, z_2: T(w_1, z_1) \land T(w_2, z_2) \land z = z_1z_2$,
we can transform the transducer constraint to a simpler form in order to reduce the number
of noodles created during the noodlification. Formally, we add the following rule to the
decision procedure:
{
\small
\[
	\rulehomtrans{}:
	\begin{array}{rcccccl}
		( & \incls \uplus \{ T(\sterm_1\sterm_2, \tterm) \}, & T(\sterm_1\sterm_2, \tterm)\lconcat \frontier, & \lass, & \solvedeqs, & \lenvars & )
		\\
		\hline
		( & \incls \cup H,                                   & H\lconcat \frontier,
		  & \lass,
		  & \solvedeqs,                                      & \lenvars
		  & )
	\end{array}\
	\varphi_{\rulehomtrans{}}
\]
\vspace*{-6mm}
\begin{align*}
	\textstyle
	\text{where }
	 & \varphi_{\rulehomtrans{}} \defiff T \text{ is homomorphism, } H = \{ T(\sterm_1, x_1), T(\sterm_1, x_2), \equationNodeArgs{x_1x_2}{\tterm} \}\text{,} \\&x_1 , x_2 \text{ are fresh}\text{.}
\end{align*}
}

\scaledvspace{-3mm}
\paragraph{Intersection emptiness checking.}
Non chain-free cubes of the form
$T_1(x,y) \land T_2(x,y)$ where $x,y$ are variables are not decidable (c.f.~\cite{morvan-rat-graphs}), but we can often decide them, based on a heuristic observing that the string constraint is satisfiable iff $T_1 \cap T_2 \neq \emptyset$.
In the first step, we create a transducer
$T = \projection{-1}(\syncOnTapes{\ft{T}_1}{\ft{T}_2}{1}{1})$. $T$ is a 2-tape transducer with a property that the original constraint is satisfiable iff
$(w,w) \in R(L(T))$ for some $w\in\Sigma^*$. In the second step, we choose a threshold $\ell$ (empirically chosen as $4$ with higher values having no impact due to the structure of the transducer constraints occurring in the benchmark) and explore reachable configurations of $T$ where we track the possible content of both tapes.
If for some configuration the tapes are not prefixes of the other, we can stop the expansion of this
node as we surely cannot reach a configuration with the same string on both tapes. If for some configuration the length of some tape exceeds $\ell$,
we report \emph{unknown}. If we reach an accepting configuration with $(w,w)$ on tapes, the constraint is \emph{sat}. Otherwise, we report \emph{unsat}.

\scaledvspace{-0.1cm}
\section{Experimental Evaluation}
\scaledvspace{-0.25cm}


We implemented the presented procedure (\ziiinoodler in the results) on top of the version 1.3.0 of the string solver \ziiinoodler, and compared it with the s.o.t.a solvers \cvcv~\cite{cvc5} (version~1.3.2), \ziii~\cite{z3} (version 4.15.4), \ostrich~\cite{ostrich2} (version 2.0.1), and the previous version 1.3.0 of \ziiinoodler (included as a baseline; does not support \replaceall{} operation, but can sometimes solve them by removing them using its rewriter).
The tools are run with 120\,s timeout, 8\,GiB memory limit.
We use benchmarks from SMT-LIB'25 release~\cite{SMTCOMP25benchmarks}, tracks \QFS and \QFSLIA{}.
We take all formulae with \replaceall{} operations, giving us 4 benchmark sets:
%
\begin{enumerate*}
	\item \pcp~\cite{RegularPropagationAnthonyJez25,MarkgrafPR} (1000 formulae): formulae encoding the \emph{Post correspondence problem} using nested \replaceall operations where 500 formulae are randomly generated PCP instances and 500 formulae are hard PCP instances selected from~\cite{Lorentz01,Omori25},
	\item \rna~\cite{RegularPropagationAnthonyJez25,MarkgrafPR} (1000 formulae): benchmarks modelling a reverse transcription process inspired by bioinformatics, where an unknown RNA is converted into a DNA string by a series of \replaceall operations that simulate nucleotide base pairing,
	\item \transducerplus~\cite{AnthonyInteger2020} (91 formulae): benchmarks encoding 7 rational relations (toLower, toUpper, trim, escapeString, addSlashes, htmlEscape, and htmlUnescape) using deeply nested \replaceall and asking whether for each two different relation, one of the four operations (commutativity, duality, equivalence, and idempotence) holds, and
	\item \webapp (340 formulae): formulae from real world web applications using regular constraints, word equations, and \replaceall operation. Note that the original set contains 681 formulae, we selected only those containing \replaceall operation.
\end{enumerate*}
Note that the benchmark set does not contain and negated functional transducer constraints, as they are not supported by the SMT-LIB standard and to our knowledge no such benchmarks exist.

We apply the presented procedure to handle SMT-LIB functions \texttt{replace\_all}, \texttt{replace\_re}, and \texttt{replace\_re\_all} operations (and support \texttt{to\_upper} / \texttt{to\_lower} as non-standard extensions).
However, the method works for any transducer-encoded string relation:
for example, string rotations, interleavings, and other functions that currently have no direct SMT-LIB support.
If SMT-LIB were to allow explicit transducer specifications as first-class terms, our procedure would accept and solve those constraints directly.

%

\begin{figure}[t]
	\begin{subfigure}{0.33\linewidth}%
		\centering%
		\includegraphics[
			width=0.8\linewidth,
			alt={Scatter plot comparing running times of Z3-Noodler and cvc5. Each point represents a benchmark instance, with axes showing time (seconds) for each tool. Points along the diagonal indicate similar performance; deviations show differences. Z3-Noodler significantly outperforms cvc5 on all the benchmarks except Transducer+ where cvc5 is only slightly faster.}
		]{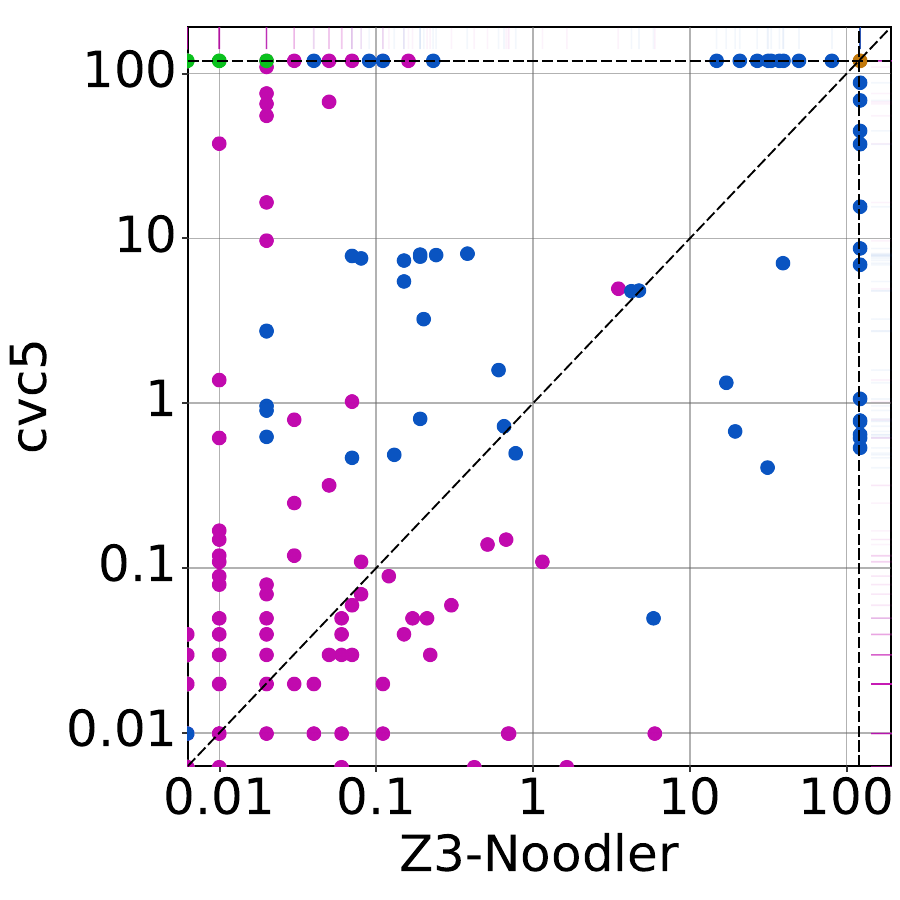}%
	\end{subfigure}%
	\begin{subfigure}{0.33\linewidth}%
		\centering%
		\includegraphics[
			width=0.8\linewidth,
			alt={Scatter plot comparing running times of Z3-Noodler and Z3. Each point represents a benchmark instance, with axes showing time (seconds) for each tool. Points along the diagonal indicate similar performance; deviations show differences. Z3-Noodler significantly outperforms Z3 on all the benchmarks. Z3 timeouts on most of the benchmarks or directly cannot handle the constraints.}
		]{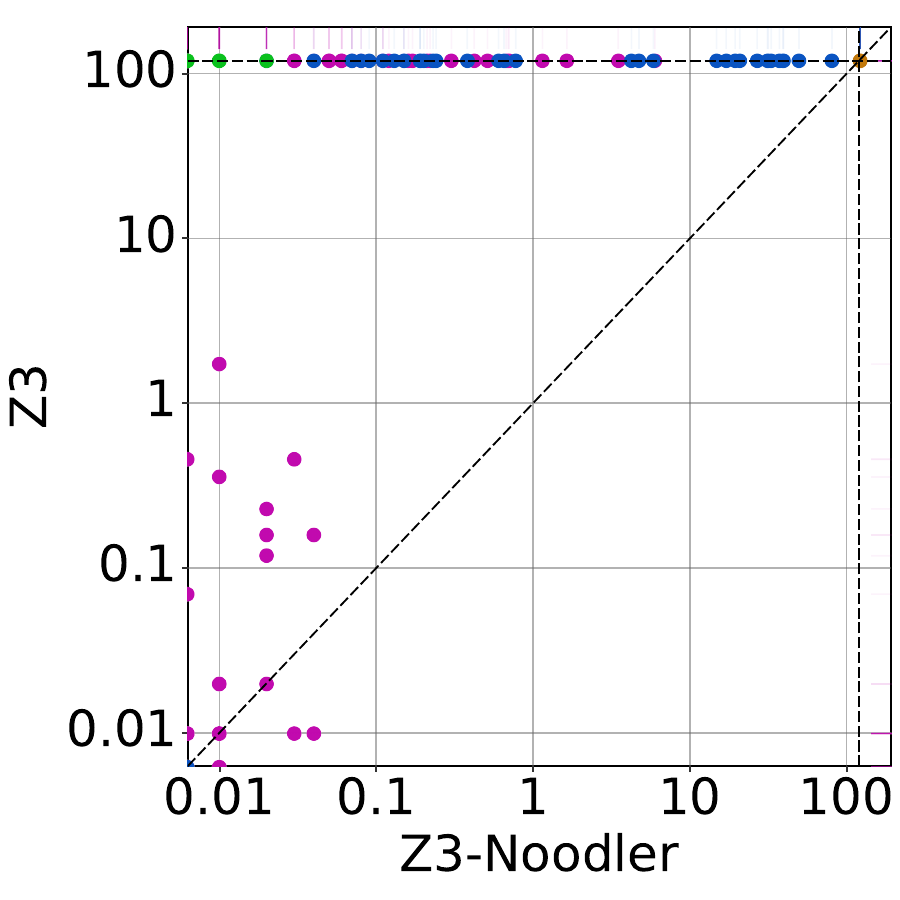}%
	\end{subfigure}%
	\begin{subfigure}{0.33\linewidth}%
		\centering%
		\includegraphics[%
			width=0.8\linewidth,%
			alt={Scatter plot comparing running times of Z3-Noodler and OSTRICH. Each point represents a benchmark instance, with axes showing time (seconds) for each tool. Points along the diagonal indicate similar performance; deviations show differences. Z3-Noodler significantly outperforms OSTRICH on all the benchmarks except a few benchmark instances.}%
		]{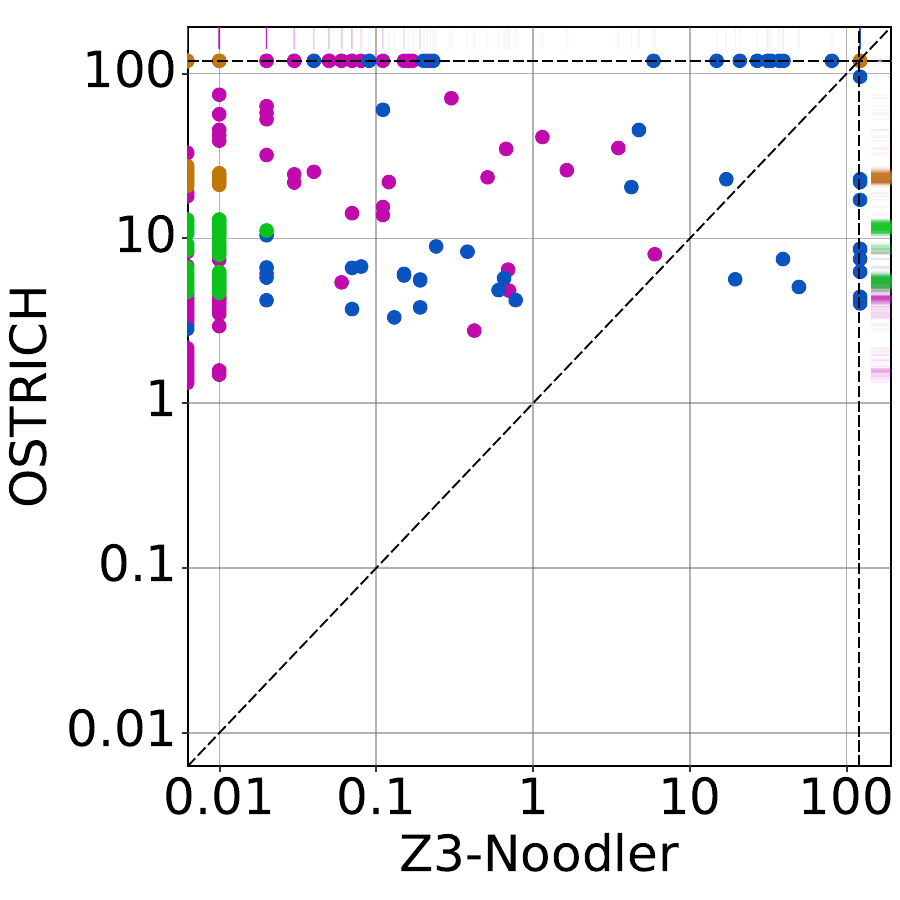}%
	\end{subfigure}%
	\caption{
		Comparison with \cvcv, \ziii, and \ostrich.
		Times are in seconds, axes are logarithmic.
		Colors distinguish benchmark sets:
		\RGBcircle{194,120,11}~\pcp,
		\RGBcircle{35,194,29}~\rna,
		\RGBcircle{13,85,,194}~\transducerplus, and
		\RGBcircle{194,10,175}~\webapp.
	}
	\label{fig:scatter-plots}
	\vspace{-5mm}
\end{figure}

\paragraph{Results.}
The results summarizing the number of solved instances and running times for \ziiinoodler and other tools are given in \cref{tab:comparison}.
In addition, \cref{fig:scatter-plots} shows scatter plots comparing \ziiinoodler with other tools.
%
%
In the table, we also evaluate the impact of the heuristics from \cref{sec:heuristics} where
\ziiinoodlernohomo is a version of \ziiinoodler without the heuristic for the homomorphic transducer constraints, while \ziiinoodlernopre is a version without the intersection emptiness checking heuristic.
We do not evaluate the impact of the heuristic for reducing transducer constraints with string literals as it is tightly integrated into the decision procedure and cannot be easily switched off.

\begin{table}[t]
	\caption{
		The number of solved instances and the time without timeouts (seconds) needed to solve them per tool and benchmark set.
		%
		%
		The size of each benchmark set is in parentheses,
		\nearzero is a~value close to zero.
		%
	}
	\label{tab:comparison}
	\resizebox{\linewidth}{!}{%
		\begin{booktabs}{colspec={lQ[r,gray!30]Q[r,gray!30]rrQ[r,gray!30]Q[r,gray!30]rrQ[r,gray!30]Q[r,gray!30]},cell{1,2}{even}={c=2}{c}} 
\toprule
& \pcp & & \rna & & \transducerplus & & \webapp & & All & \\
& (1,000) & & (1,000) & & (91) & & (340) & & (2,431) & \\
\cmidrule[lr]{2-3}\cmidrule[lr]{4-5}\cmidrule[lr]{6-7}\cmidrule[lr]{8-9}\cmidrule[lr]{10-11}
& solved & time & solved & time & solved & time & solved & time & solved & time \\
\midrule

\ziiinoodler   & \br{500} & \nearzero & \br{1000} & 8     & 41      & 455       & 338      & 20        & \br{1879} & 482       \\
\ziiinoodlernohomo & \br{500} & \nearzero & \br{1000} & 8     & 36      & 497       & \br{339} & 22        & 1875      & 527       \\
\ziiinoodlernopre  & 0        & N/A       & \br{1000} & 8     & 41      & 477       & 338      & 19        & 1379      & 503       \\
\cvcv              & 0        & N/A       & 0         & N/A   & \br{42} & 372       & 293      & 456       & 335       & 827       \\
\ostrich           & 479      & 11,276    & \br{1000} & 8,530 & 39      & 613       & 247      & 1,913     & 1765      & 22,332    \\
\ziii              & 0        & N/A       & 0         & N/A   & 1       & \nearzero & 213      & 5         & 214       & 5         \\
\ziiinoodlerold    & 0        & N/A       & 0         & N/A   & 0       & N/A       & 25       & \nearzero & 25        & \nearzero \\
\bottomrule
\end{booktabs}
	}
	\vspace{-6mm}
\end{table}

For the \pcp and \rna benchmark sets, only \ziiinoodler and \ostrich can solve any instance.
For \pcp{}, both \ziiinoodler and \ostrich can solve only the randomly generated instances, where \ziiinoodler solves all of them nearly instantly, while \ostrich takes over 3 hours to solve 479 instances.
Since all the instances from this benchmark set lead to non-chain-free constraints, the intersection emptiness checking heuristic is needed here for \ziiinoodler.
Without it, it is unable to solve any formula.

Both \ziiinoodler and \ostrich solve all instances in the \rna benchmark set but \ostrich is 1000 times slower than \ziiinoodler.
%
%
%
%
\cvcv solves the most instances in the \transducerplus benchmark set, with \ziiinoodler, \ostrich close behind.
Their runtimes are also similar.
\ziii solves only one instance here.
For \ziiinoodler, the heuristic for homomorphic transducers has an impact here.
Without it, \ziiinoodler is slower and solves 5 instances less.
%
%
For \webapp, \ziiinoodler solves more than other tools, nearly all instances, while being much faster. 
Surprisingly, \ziiinoodlernohomo can solve one more instance than \ziiinoodler.
This is not caused by the heuristic, but by different choices within the SAT solver guiding the solving.
\ziiinoodlerold solves 25 instances while not supporting \replaceall and \replacereall operations: the operators are simple enough to be removed by the rewriter without reaching the string solver.

Except on \transducerplus, \ziiinoodler alone yields the same results as a virtual best solver (VBS) composed of all solvers combined (thus, other solvers do not contribute to the VBS).
For \transducerplus, the deep nesting of \replaceall{} operations generates large transducers in \ziiinoodler, sometimes leading to resource exhaustion.
Thus, VBS that includes \ziiinoodler and either \cvcv or \ostrich solves on \transducerplus 12 additional instances compared to \ziiinoodler alone.
When all three solvers are combined in the VBS, it solves 18 additional instances on \transducerplus compared to \ziiinoodler alone (adding \ziii has no impact).
This also demonstrates that our proposed approach and the approach of \cvcv are orthogonal.
However, since the two approaches are quite different, combining them would be highly complicated.
%
%
Overall, \ziiinoodler consistently outperforms other solvers in both the number of solved instances and runtime, except on one benchmark set where it is a close second.


\scaledvspace{-2mm}
\scaledvspace{-0.25cm}
\section{Conclusion}
\scaledvspace{-0.25cm}
We extended the stabilization-based automata approach to the chain-free fragment with transducer constraints, efficiently integrating them with length reasoning while avoiding variable splitting and monadic decomposition. We also introduced several heuristics that proved highly effective in practice, and our procedure extends the framework with support for negations, allowing unrestricted disequalities and negated functional transducer constraints. Implemented on top of \ziiinoodler, our method significantly outperforms existing solvers on benchmarks with relational constraints, often by orders of magnitude.

	{\fontsize{9pt}{11pt}
		\subsubsection{Acknowledgments.}
		This work was supported by
		the Czech Science Foundation projects GA25-17934S and GA26-22640S.
		the FIT BUT internal project FIT-S-26-9011, and
		\raisebox{-1pt}{\protect\includegraphics[height=8pt]{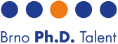}}
		The work of David Chocholatý, Brno Ph.D. Talent Scholarship Holder, is Funded by the Brno City Municipality.

		\subsubsection{Disclosure of Interests.}
		The authors have no competing interests to declare that are relevant to the content of this article.

		\subsubsection{Data-Availability Statement.}
		The source code of \ziiinoodler are available at~\cite{noodlerGithub}.
		The benchmarks, the tools, the evaluation scripts, and the results of the experimental evaluation are available at~\cite{noodlerArtifactCAV26}, to be run on CAV'26 Artifact Evaluation virtual machine~\cite{ArtifactVMOfficialCAV26}.

	}

\bibliographystyle{splncs04}
\bibliography{literature}

\clearpage
\appendix
\section{Proof of \Cref{lemma:chain-free-equivalence}}

\lemmaChainFreeEquivalence*
\begin{proof}[sketch]
	The equivalence of the chain-free definition using an inclusion graph from~\cite{NoodlerFM23} with the original definition of chain-free constraints from~\cite{ChainFree} was shown in~\cite[Lemma 11]{NoodlerFMJournal25}.
	Since the definition of chain-freeness in~\cite{ChainFree} is based on a notion of a splitting graph which, along with the inclusion graph from~\cite{NoodlerFMJournal25}, is independent of the type of constraints in the nodes (inclusions or transducers), the equivalence proof from~\cite{NoodlerFMJournal25} still holds.
	Thus, the proof can be easily adapted to our extended definition of the inclusion graph with transducer constraints.
	\qed
\end{proof}

\section{Proof of \Cref{lemma:chain-free-with-negations-and-transducers}}
\lemmaChainFreeWithNegationsAndTransducers*
\begin{proof}[sketch]
	A cube satisfying the conditions of the Theorem can be transformed into a disjunction of positive directly chain-free cubes using the presented encodings in \cref{sec:chain-free} (with possibly extra $\tocode$ functions).
	Each of these cubes can be solved by the decision procedure of \cref{sec:dec-proc}, with special handling of $\tocode$ function.
	The $\tocode$ function is ignored in the decision procedure, until we check for the applicability of the rule $\ruleStable$.
	The formula $\lenimg{\varphi_m(\incls, \lenvars) \land \lass \land \solvedeqs} \land \constraintsLengths$ in the applicability check is extended to $\lenimg{\varphi_m(\incls, \lenvars) \land \lass \land \solvedeqs} \land \constraintsLengths \land \varphi_{\tocode}(\incls, \langAssign, \solvedeqs)$, where $\varphi_{\tocode}(\incls, \langAssign, \solvedeqs)$ is the LIA formula encoding the possible results of used $\tocode$ functions using the approach of~\cite{NoodlerSAT24} (the results of $\tocode$ functions applied on variables used in transducers can be extracted from Parikh image, see \cref{sec:lengths}).
	\qed
\end{proof}

\section{Proofs of \Cref{lemma:stable-has-solution,lemma:stable-projection,lemma:stable-solved}}
\label{sec:app-proofs}

\lemmaStableHasSolution*
\begin{proof}[sketch]
	Let $\formula$ be a stable and directly chain-free cube with all languages of variables non-empty (i.e. regular constraints are satisfiable).
	Let $G_{\formula}$ be the (acyclic) inclusion graph of $\formula$.
	Since $\formula$ is stable, $\langAssign$ satisfies all constraints in $G_{\formula}$ (all inclusions hold). 
	We can therefore construct a solution by first selecting values from $\langAssign$ for all variables that occur only in outputs of constraints in $G_{\formula}$ and then, by traversing the graph in reversed topological order, propagating the values of output variables to input variables (as the inclusions hold, each solution of the output must be the solution of the input and, from chain-freeness, we know all the variables in input are different).
	\qed
\end{proof}

\stableProjection*
\begin{proof}
	We need to show that any assignment $\ass$ to $\sharedv$ with $\ass(x)\in\Regof\cube(x)$ for all $x\in \sharedv$, is a part of some solution of $\sharedv$.
	In the proof of \cref{lemma:stable-has-solution} we show exactly this, from an assignment to output-only variables we construct a solution for a stable cube.
\end{proof}

\stableConcatenationFree*
\begin{proof}
	From \cref{lemma:stable-projection}, we can construct a solution from assignment to output-only variables of $\stof{\cube}$.
	As the variables of $\solvedcOf{\cube}$ and $\Lenof{\cube}$ occur only in the outputs of $\stof{\cube}$, we can construct a solution of $\stof{\cube}$ from the solution of $\solvedcOf{\cube} \land \Regof\cube \land \Lenof\cube$, therefore $\cube$ is an existential extension of $\solvedcOf{\cube} \land \solvedcReg \land \Lenof\cube$.
\end{proof}

\section{Proof of \Cref{lemma:monadic-existential-extension}}
\lemmaMonadicExistentialExtension*
\begin{proof}[sketch]
	Let $\formula = \constraintsRegulars \land \transductionArgs{s}{t}$ be the input formula.
	By construction, $\noodlifyMonadicFt(\formula)$ gives a solution iff $\formula$ has a solution.
	That is, $\noodlificationNftLang$, after delimiter symbols are projected away, preserves solutions of $\transductionArgs{s}{t}$.
	When splitting $\noodlificationNftLang$ into noodles, a union of noodles is equivalent to $\noodlificationNftLang$ (no solution is lost, and no new solution is added).
	Finally, the transformation of a noodle $\noodle$ into a conjunction $\constraintsRegulars_\noodle \land \aligneqin_\noodle \land \aligneqout_\noodle \land \constraintsTransducers_\noodle$ also preserves solutions, as
	\begin{enumerate*}[label=(\roman*)]
		\item concatenating all the segment transducers gives precisely the original constraint,
		\item concatenating all the right sides of the input binding equations gives precisely the input side of the original constraint,
		\item concatenating all the left sides of the output binding equations gives precisely the output side of the original constraint, and
		\item substituting the input and output concatenated variables in the original constraint by the right and left sides of the input and output binding equations, respectively, gives precisely the original constraint $\transductionArgs{s}{t}$.
	\end{enumerate*}
	We at most restrict the existing languages of the variables of $\formula$ by eliminating words that do not satisfy the constraint $\transductionArgs{s}{t}$.
	Indeed, every solution of $\noodlifyMonadicFt(\formula)$ assigns values to all the variables of $\formula$, which at the same time satisfy $\formula$.
	We can clearly see that $\noodlifyMonadicFt(\formula)$ preserves solutions of $\formula$ and adds no new solution, and that $\noodlifyMonadicFt(\formula)$ contains fresh variables.
	If all fresh variables are existentially quantified, we get precisely an existential extension of $\formula$.
	Moreover, $\noodlifyMonadicFt(\formula)$ is clearly in the almost \concatenationFree form, as for each noodle, all inclusions holds, each transducer constraint in $\constraintsTransducers_i$ is concatenation-free transducer constraint by construction, and each input/output binding equation in $\aligneqin_i$ and $\aligneqout_i$ satisfies the definition for almost concatenation-free form.
	\qed
\end{proof}

\begin{restatable}{lemma}{lemmaPreserveModels}\label{lemma:preserve-models}
	For each node of the proof graph $N$ we have that the formula \\$\bigvee_{N' \text{ is successor of }N}\varphi_{N'}$ is an existential extension of $\varphi_N$.
\end{restatable}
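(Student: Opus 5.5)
The argument is a case analysis over the inference rule applied at $N$. The successors of $N$ are exactly the conclusions of that one rule, since \cref{alg:satisfiability} applies the first applicable rule and $\ruleStable$ has no vertex successors. For each rule I show that the disjunction of the conclusion cubes, after existentially quantifying the fresh variables, is equivalent to the premise cube $\incls \land \lass \land \solvedeqs$. Two facts are used throughout. First, existential extension is transitive. Second, it is compatible with conjunction with constraints over old variables: if $\psi'$ is an existential extension of $\psi$ and $\chi$ mentions no fresh variables of $\psi'$, then $\psi'\land\chi$ is an existential extension of $\psi\land\chi$. This second fact lets me reason locally on the processed constraint $\eta$ together with $\lass$, and then add back the untouched constraints $\incls\setminus\{\eta\}$ and $\solvedeqs$.

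\emph{Refinement rules.}
\begin{itemize}
\item For $\refineStable$ on an equation, $\eqtostable$ guarantees that the disjunction of the $\lass_i$ conjoined with $\eta$ is equivalent to $\lass\land\eta$. The refinements only shrink languages, and every solution is kept by some noodle.
\item For $\refineStable$ on a transducer, I check directly that $T(s,t)\land\lass$ is equivalent to $\exists z.\, T(s,z)\land z\equals t \land \lass \land z\in T[\lass(s)]$. The added language for $z$ is implied by $T(s,z)$ together with $\lass(s)$.
\item For $\ruleAlignSplitLen$, I invoke \cref{lemma:monadic-existential-extension} for transducers and the corresponding property of $\eqtomonadic$ from \cite{NoodlerOOPSLA23} for equations. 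The updates to $\lenvars$ are bookkeeping and do not change the represented cube.
\item For $\ruleSkip$, the cube is literally unchanged.
\item For $\ruleCombineNonLength$, I introduce a fresh $x$ with $x_1\cdots x_n\equals x$ and $x\in\lass(x_1\cdots x_n)$. The language constraint is implied by the equation, and substituting the block by $x$ in $\eta$ is equivalent given the equation. So the conclusion is an existential extension with witness $x := x_1\cdots x_n$.
\item The heuristic rules go the same way. For $\rulereducetrans$, because $|\lass(\sterm)|=1$, $T(\sterm,\tterm)$ is equivalent to $\exists f.\, f\equals\tterm \land f\in T[\lass(\sterm)]$. For $\rulehomtrans$, the equivalence follows directly from the homomorphism property, reading the second constraint as $T(\sterm_2,x_2)$.
\end{itemize}

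\emph{Substitution rules.} Here the equation $\sterm\equals x$ (or $x\equals\sterm$) is removed from $\incls$, the substitution $\subst=\{x\mapsto\sterm\}$ is applied to $\incls$, $\frontier$ and $\solvedeqs$, and $x\equals\sterm$ is added to $\solvedeqs$. I argue the two directions separately.
\begin{itemize}
\item From premise to conclusion: a premise solution satisfies $x=\sterm$, so it satisfies every $\subst(\eta)$ and the new solved equation. It is therefore also a solution of the conclusion.
\item From conclusion to premise: a conclusion solution satisfies $x\equals\sterm$ from $\solvedeqs$, and hence the original constraints as well. It must also satisfy $x\in\lass(x)$, and this holds because of the side condition $\lass(\sterm)\subseteq\lass(x)$ and the fact that the variables of $\sterm$ satisfy their languages.
\end{itemize}
No fresh variables are introduced, so the two cubes are equivalent. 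Frontier reinsertion (the set $H$) and $\lenvars$ do not affect the represented cube.

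The main obstacle is getting the represented cube exactly right in the substitution rules. After substitution, $x$ may survive only in $\solvedeqs$, so I must check that $\lass(x)$ is still enforced, or is redundant, in the conclusion. This is where the side condition $\lass(\sterm)\subseteq\lass(x)$ is essential, and I would state it explicitly in the proof. A second point of care is the $\refineStable$ equation case, which relies on the completeness of $\noodlifyrefine$ proven in \cite{NoodlerOOPSLA23}.
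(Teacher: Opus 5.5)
Your proposal is correct and takes the same route as the paper's proof. Both argue rule by rule: $\ruleAlignSplitLen{}$ is discharged by \cref{lemma:monadic-existential-extension}, $\refineStable{}$ by the guarantees of stabilizing noodlification, and every remaining rule by an equivalence-preserving rewrite, which you spell out in more detail than the paper does (including the equation case of $\ruleAlignSplitLen{}$, the substitution rules and the heuristic rules). Two minor remarks: in the substitution rules, $\lass$ (and hence $x\in\lass(x)$) is carried unchanged into the conclusion, so the side condition $\lass(\sterm)\subseteq\lass(x)$ is not actually needed for the equivalence; and in the equation case of $\refineStable{}$, your argument, like the paper's, tacitly requires $\eta$ to remain in $\incls$, although the rule as printed removes it.
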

\begin{proof}[sketch]
	Consider the rule $\ruleAlignSplitLen{}$. Observe that each successor contains a
	single disjunct from $\noodlifyMonadicFt(\eta \land \lass)$ (and the original constraint is removed). Therefore, the disjunction of the successors is an existential extension of $\varphi_N$ according
	to \cref{lemma:monadic-existential-extension}. Similarly, the lemma for
	the rule $\ruleAlignSplit$ follows from~\cref{sec:noodlestable}. The remaining rules have only a single
	successor and perform equivalent transformations.
	\qed
\end{proof}

\section{Non-chain-free Formulae}

In \cref{sec:chain-free}, we described how to reason about chain-free formulae.
In this section, we briefly show a counterexample demonstrating that our technique cannot be directly applied to non-chain-free formulae without modifications.

\begin{wrapfigure}[10]{r}{0.15\textwidth}
	\vspace{-10mm}
	\centering
	$$
		\begin{array}{c}
			[ab]^*       \\
			\hline
			x            \\
			\hline
			\alpha_{0,0} \\
			T_{0,0}      \\
			\beta_{0,0}  \\
			\hline
			x            \\
			\hline
			[ab]^*
		\end{array}
	$$
	\vspace{-0.7cm}
	\captionof{figure}{
		Elements of $\psi_p$.
	}
	\label{fig:running-example/possible-alignment-non-chain-free}
\end{wrapfigure}

\beginexample
\label{example:non-chain-free}
Consider the formula $\formula \colon \ft{T}(x,x) \land x \in \{ ab, ba \} \land \constraintsLengths$ where $\ft{T}$ is a transducer recognizing the relation $\{(ab, ba), (ba, ab)\}$, and $\constraintsLengths$ are some length constraints where $x$ is a length variable.
Notice that formula $\formula$ has no solution since both sides of $\ft{T}$ comprise of only $x$, hence need to be equal, which contradicts the relation recognized by $\ft{T}$.
However, if we construct $\noodlificationNftLang$ and derive the single possible noodle $\noodle_0$ to refine the languages of $x$, transforming the formula into the monadic form, we get
$\psi_0 = \constraintsRegulars_0 \land \aligneqin_0 \land \aligneqout_0 \land \constraintsTransducers_0 $ where
\begin{enumerate*}
	\item $\constraintsTransducers_0 = \ft{T}_{0,0}(\alpha_{0,0},\beta_{0,0})$ is segment (transducer) constraint where $\alpha_{0,0}, \beta_{0,0}$ are fresh variables, and $\ft{T}_{0,0} = \noodlificationNftLang = \ft{T}$ recognizes the same language $\{ (ab, ba), (ba, ab) \}$ still,
	\item $\aligneqin_0 =  \equationNodeArgs{x}{\alpha_{0,0}}$ are input binding equations,
	\item $\aligneqout_0 = \equationNodeArgs{x}{\beta_{0,0}}$ are output binding equations, and
	\item $\constraintsRegulars_0 = \alpha_{0,0} \in \projTo{1}{\ft{T}_{0,0}} \land \beta_{0,0} \in \projTo{2}{\ft{T}_{0,0}}$ are new regular constraints where $\projTo{1}{\ft{T}_{0,0}} = \projTo{2}{\ft{T}_{0,0}} = \{a,b\}^*$ (the languages remain unmodified).
\end{enumerate*}
The constraint is in monadic form with no variable having an empty language and length image, which disregards the position of the letters in the words, so words $ab$ and $ba$ are equal, having a solution of length 2 (hence, $\formula$ should have a solution), 
yet,
no solution exists.
\sectionEndSymbol

\end{document}